\renewcommand{\algorithmicrequire}{\textbf{Data Owner:}} 
\newcommand{\cmark}{{\color{black}\ding{51}}}%
\newcommand{\xmark}{{\color{black}\ding{55}}}%
\newcommand{\pseudo}{{\mathcal{F}}}
\newcommand{\hash}{{\mathcal{H}}}
\newcommand{\key}{{\mathcal{K}}}
\newcommand{\enc}{{\ensuremath{\sf Enc}}}
\newcommand{\dec}{{\ensuremath{\sf Dec}}}
\newtheorem{theorem}{Theorem}
\newtheorem{definition}{Definition}
\newtheorem{proof}{Proof}
\def\BibTeX{{\rm B\kern-.05em{\sc i\kern-.025em b}\kern-.08em
    T\kern-.1667em\lower.7ex\hbox{E}\kern-.125emX}}
\begin{document}

\title{Non-interactive Multi-client Searchable Symmetric Encryption with Small Client Storage}

\author{Hanqi Zhang, Chang Xu, Rongxing Lu, ~\IEEEmembership{Fellow,~IEEE}, Liehuang Zhu, Chuan Zhang, \\ Yunguo Guan
  \IEEEcompsocitemizethanks{
    \IEEEcompsocthanksitem Hanqi Zhang, Chang Xu, Liehuang Zhu, and Chuan Zhang are with Beijing Institute of Technology.\protect\\
    E-mail: \{ zhanghanqi, xuchang, liehuangz, and zhangchuan\}@bit.edu.cn
    \IEEEcompsocthanksitem Rongxing Lu and Yunguo Guan are with Faculty of Computer Science, University of New Brunswick, Fredericton, Canada. \protect\\
    Email: \{RLU1, yguan4\}@unb.ca.
  }
}

\markboth{Journal of \LaTeX\ Class Files,~Vol.~14, No.~8, August~2015}%
{Shell \MakeLowercase{\textit{et al.}}: Bare Demo of IEEEtran.cls for Computer Society Journals}

\IEEEtitleabstractindextext{%
\begin{abstract}
Considerable attention has been paid to dynamic searchable symmetric encryption (DSSE) which allows users to search on dynamically updated encrypted databases. To improve the performance of real-world applications, recent non-interactive multi-client DSSE schemes are targeted at avoiding per-query interaction between data owners and data users. However, existing non-interactive multi-client DSSE schemes do not consider forward privacy or backward privacy, making them exposed to leakage abuse attacks. Besides, most existing DSSE schemes with forward and backward privacy rely on keeping a keyword operation counter or an inverted index, resulting in a heavy storage burden on the data owner side. To address these issues, we propose a non-interactive multi-client DSSE scheme with small client storage, and our proposed scheme can provide both  forward privacy and backward privacy. Specifically, we first design a lightweight storage chain structure that binds all keywords to a single state to reduce the storage cost. Then, we present a Hidden Key technique, which preserves non-interactive forward privacy through time range queries, ensuring that data with newer timestamps cannot match earlier time ranges. We conduct extensive experiments to validate our methods, which demonstrate computational efficiency. Moreover, security analysis proves the privacy-preserving property of our methods.

\end{abstract}

\begin{IEEEkeywords}
Searchable Encryption, Non-interaction, Multi-client, Time Range Query
\end{IEEEkeywords}}

\maketitle

\IEEEdisplaynontitleabstractindextext

\IEEEpeerreviewmaketitle

\IEEEraisesectionheading{\section{Introduction}\label{sec:introduction}}
\IEEEPARstart{T}he ubiquitous availability of cloud computing services has enabled more and more individuals and companies (regarded as data owners) to save data in cloud storage servers.
Data owners typically encrypt data before uploading it to the cloud because it is completely trusted. Although encryption provides security at a high level, it restricts data utility such as search and calculation. Therefore, Song et al. \cite{song2000practical} introduce searchable symmetric encryption (SSE) to realize keyword searches over the encrypted database.

Existing SSE schemes can be classified based on the number of clients, i.e., single-client and multi-client. 
Existing schemes typically \cite{bost2016ovarphiovarsigma,sun2021practical,bost2017forward,ghareh2018new} focus on the single-client architecture.
In the single-client architecture, the data owner outsources its encrypted data to the server, and then sends search queries.
Compared to the single-client framework, the multi-client framework is more practical in many scenarios, including medical data sharing \cite{XU2020394}, task recommendations \cite{8253516}, geographic location queries \cite{9155505, 9564032}, etc.  The reason is that the multi-client framework can support any number of clients conducting queries over the encrypted database. Therefore, we employ the multi-client architecture as the backbone and propose a multi-client SSE scheme in this paper.

Multi-client SSE schemes can be categorized into two types: interactive multi-client SSE schemes \cite{jarecki2013outsourced, faber2015rich, zuo2019dynamic} and non-interactive multi-client SSE schemes \cite{10.1007/978-3-319-45744-4_8, sun2020non, sun2018dynamic}.
For interactive multi-client SSE schemes, the data owner has to be online. The reason is that the data user is required to keep interacting with the data owner for every query to ask for the necessary search information.
Therefore, it is critical to construct  non-interactive SSE schemes that can provide high communication efficiency. Sun et al.\cite{10.1007/978-3-319-45744-4_8} first propose the concept of non-interaction and develop the first multi-client searchable encryption protocol to avoid per-query interaction between the data owner and clients. Furthermore, they design a new multi-client searchable encryption mechanism \cite{sun2020non} that achieves enhanced security, i.e.,  untrusted clients are considered.
However, the two schemes \cite{10.1007/978-3-319-45744-4_8, sun2020non} only focus on the static database. For static searchable encryption, scalability is not provided efficiently, as it is achieved  by rebuilding indices or other expensive techniques \cite{song2000practical,9155505,8253516}.
Sun et al.\cite{sun2018dynamic} presents a boolean searchable symmetric
encryption scheme in a dynamic and non-interactive manner, which supports arbitrary boolean queries in the multi-client scenario. Unfortunately, it cannot achieve forward and backward privacy.
 Forward privacy is an important property to prevent the cloud server from linking newly added files with past searches.
Backward privacy ensures that search queries do not leak the information whether the deleted documents contain the searched keywords.

\begin{table*}[]
  \centering
 \caption{Comparison of existing SSE schemes.
    $K$ represents total number of $[document, keyword]$ pairs in the database. $|W|$ and $|D|$ are the number of all keywords and the total number of documents in the database, respectively.
    $a_w$ is the number of updates for keyword $w$ and $n_w$ is the number of documents containing $w$.
    $O$ notation hides polylogarithmic factors.
    'RT' means the number of round trips of the retrieved results.
    'BP' represents whether the scheme achieves backward privacy.
    'Multi-client' denotes whether the SSE scheme supports the multi-client setting. 'NOI' denotes whether the SSE scheme avoids per-query interaction.}
\scalebox{0.98}{
    \begin{tabular}{|c|l|l|l|l|l|l|l|l|l|}
      \hline
      \multirow{2}{*}{Scheme}   & \multicolumn{2}{c|}{Computational cost} & \multicolumn{3}{c|}{Communication overhead} & \multicolumn{1}{c|}{Client}  &  \multicolumn{1}{c|}{\multirow{2}{*}{BP}}& \multicolumn{1}{c|}{Multi-} &\multicolumn{1}{c|}{\multirow{2}{*}{NOI}}  \\ \cline{2-6}
            & \multicolumn{1}{c|}{Search} & \multicolumn{1}{c|}{Update}
            & \multicolumn{1}{c|}{Search}  & \multicolumn{1}{c|}{Update} & \multicolumn{1}{c|}{Search RT}
            & \multicolumn{1}{c|}{Storage} & \multicolumn{1}{c|}{} & \multicolumn{1}{c|}{client} & \multicolumn{1}{c|}{} \\ \hline
      NIMC- SSE\cite{sun2020non}               & $O(a_w)$          & -                     & $O(n_w)$                 & -                 & 1             & $O(|W|\log D))$                  &\xmark                                            & \cmark     &\cmark \\ \hline
       Diana$_{del}$ \cite{bost2017forward}    & $O(a_w) $               & $O(\log a_w)$   & $O(n_w+d_w\log a_w)$     & $O(1)$            & 2             & $O(|W|\log D))$      &Type-\uppercase\expandafter{\romannumeral3}      & \xmark     &\xmark  \\ \hline
      Mitra \cite{ghareh2018new}               & $O(a_w)$                & $O(1)$          & $O(a_w)$                 & $O(1) $           & 2             & $O(|W|\log D))$      &Type-\uppercase\expandafter{\romannumeral2}       & \xmark     &\xmark \\ \hline
      Orion \cite{ghareh2018new}               & $O(n_w\log^2K) $        & $O(\log^2K)$    & $O(n_w\log^2K)  $        & $O(\log^2K)$      & $O(\log K)$   & $O(1)$        &Type-\uppercase\expandafter{\romannumeral1}       & \xmark     &\xmark \\ \hline
      Janus++\cite{10.1145/3243734.3243782}    & $O(n_wd) $              & $O(d)$          & $O(n_w) $                & $O(1)$            & 1             & $O(|W|\log D))$        &Type-\uppercase\expandafter{\romannumeral3}       & \xmark     &\xmark  \\ \hline
      CLOSE-FB\cite{9237959}                   & $O(a_w+CLen) $          & $O(CLen)$       & $O(a_w) $                & $O(1)$            & 2             & $O(1)$        &Type-\uppercase\expandafter{\romannumeral2}       & \xmark     &\xmark  \\ \hline
      NIMS                                     & $O(a_w + |W|) $         & $O(1)$          & $O(a_w) $                & $O(1)$            & 2             & $O(1)$        &Type-\uppercase\expandafter{\romannumeral3}       & {{\color{red}\ding{51}}}     &{{\color{red}\ding{51}}}  \\ \hline
    \end{tabular}
  }
  \label{comparison}
\end{table*}

In most  DSSE schemes \cite{8329523,ghareh2018new,10.1145/3243734.3243782},
the deletion tokens are  generated based on the keyword/document pairs.
As a result,  to delete a single file, the client has to submit  a large number of delete queries if the document contains multiple keywords.
Meanwhile, the data owner has to employ an inverted index or a forward index to store all keyword/document pairs of the whole database.
It is contrary to the data owner's initial goal of outsourcing its storage and calculation services to the cloud server.

To address above issues, we propose a \textbf{N}on-\textbf{I}nteractive \textbf{M}ulti-client DSSE scheme with \textbf{S}mall client storage (NIMS)  via a series of novel designs.
Different from traditional structures (e.g., the structure based on an inverted index),  only a global variable is exploited to reduce client storage.
Existing non-interactive SSE schemes leverage public key encryption with keyword search (PEKS) or constrained pseudorandom function, which results in heavy computation costs.
We notice that clients' time is naturally synchronized. 
Hence, by employing timestamps as an additional input, forward privacy can be achieved without requiring the data owner to be online. 
We use timestamps as an additional input to implement non-interactive forward privacy.

To the best of our knowledge, NIMS is the first non-interactive multi-client DSSE scheme with forward and backward privacy. Besides, our scheme supports small client storage and efficient deletion operations. Our main contributions can be summarized as follows:
\begin{itemize}[leftmargin=*]
\item First, we eliminate the frequently interactive process between the data owner and data users.
We propose a hidden key technique to encrypt the head block key of all keyword chains.
We leverage the time range query to achieve non-interactive forward privacy, since the latest timestamp cannot be searched by the earlier time range.
In addition, we design a novel method that converts the time range query to the wildcard matching problem, and then to a vector dot-product problem, for the highly efficient time range query.

\item Second, our scheme realizes small client storage and can provide high deletion efficiency.
To encode all keyword/document pairs, We  design a Lightweight Storage Chain (LSC) structure,  which only keeps a global variable, so that the storage cost is reduced significantly.
In the LSC structure,  the document identifiers are independently encrypted. Therefore, the data owner can perform deletion operations by sending delete tokens, improving the deletion efficiency.

\item We analyze the security of the proposed scheme and prove that NIMS is IND-CPA security.
We implemented NIMS and conducted extensive experiments to show that NIMS is more computation efficient on a real-world dataset than previous methods.
\end{itemize}

Table~\ref{comparison} shows a comparison between NIMS with previous methods.
It can be observed that our scheme achieves nearly optimal complexity in the search and update.

\section{PROBLEM FORMULATION}\label{section2}
In this section, we describe the system model, the assumed security threats, and the design goals of NIMS.
\subsection{System Model}

\begin{figure}[htbp]
  \centering
  \includegraphics[width=0.47\textwidth]{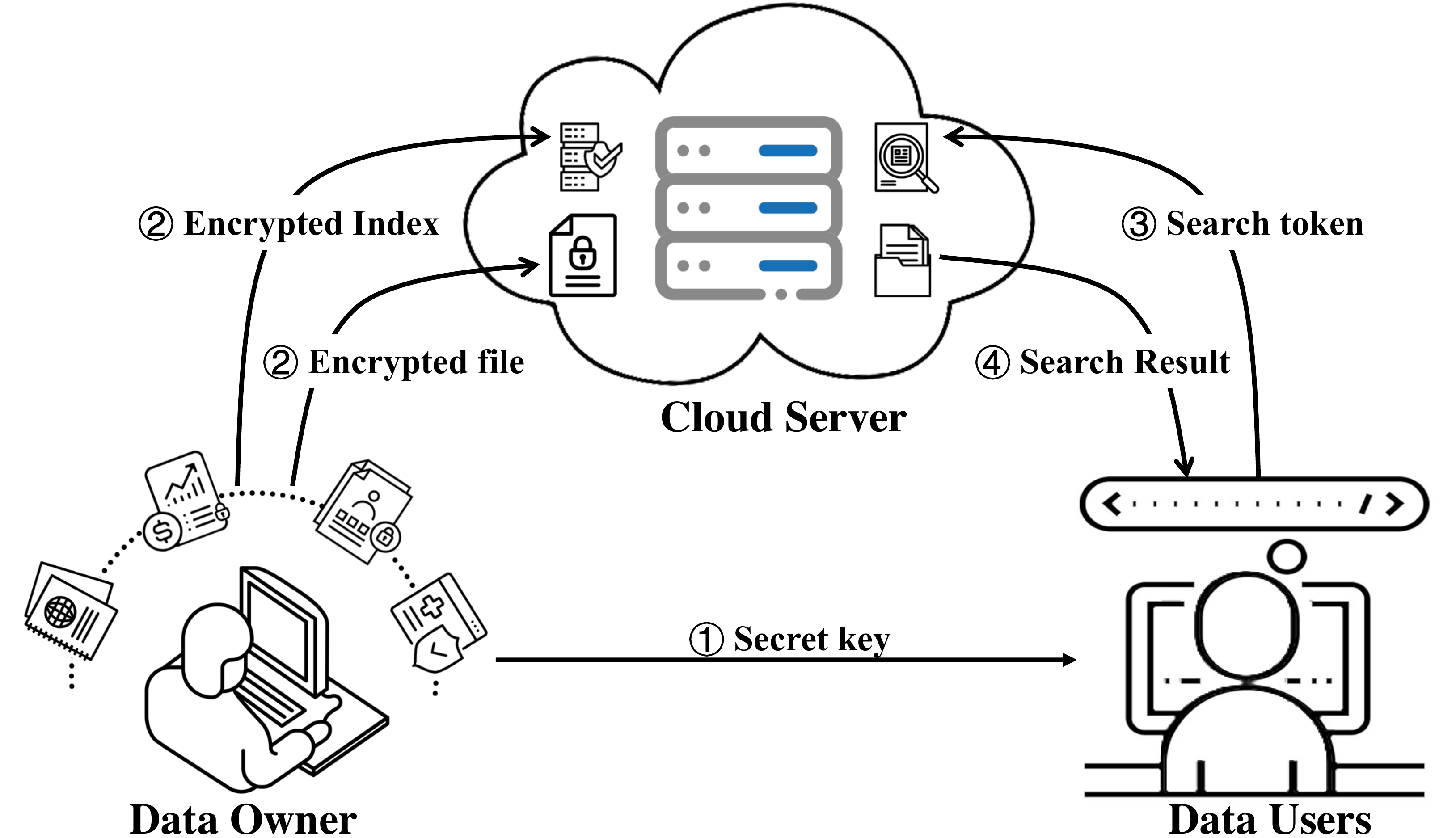}
  \caption{System model.}
  \label{fig:system}
\end{figure}

As shown in Fig.\ref{fig:system}, a multi-client DSSE considers three types of entities: data owner, data users and cloud server.

\begin{itemize}
  \item \textit{Data Owner}: The data owner has a document set $DB$ that contains several keywords. 
  These documents are outsourced to the cloud server after encryption.
  For more efficient utilization and query, the data owner tends to build secure indexes based on keywords.
  In addition, it has the right to update the document at any time.
  After receiving the registration information from a data user, the data owner shares secret keys with it over a secure channel.
  \item \textit{Data Users}: Data users are entitled to access the data by the authorization of the data owner.
  After receiving secret keys, data users have the ability to send query requests to server.
  Data users can obtain the query results set containing the specific keyword and then recover the underlying data via the symmetric keys.
  \item \textit{Server}: With high processing capability and large storage space, the cloud server stores the encrypted documents and indices outsourced by the data owner.
  Besides, the server executes a secure search calculation over encrypted data after receiving a query request and returns the corresponding search result to the data user.
\end{itemize}

\subsection{Threat Model}
\begin{itemize}
  \item \textit{Data Owner}: In the security model, we assume that the data owner can be fully \textit{trusted} since the data is considered personal property. The data owner will follow the protocols sincerely and protect her secret keys securely.
  \item \textit{Data Users}: For data users, we assume that they are \textit{honest}, i.e., they launch the keyword query request faithfully.
  They are not allowed to collude with the server because they pay for data assets and they have to protect data security.
  \item \textit{Server}: The cloud server is considered as \textit{honest-but-curious}.
  The server will faithfully execute  storage and query protocols, but it may be curious about additional information from the encrypted database and query requests.
   Furthermore, the server does not perform active attacks, i.e., colluding with the data owner or data users.
\end{itemize}

\subsection{Design Goals}
Our design goal is to develop a non-interactive multi-client DSSE scheme with small client storage.
Specifically, the following goals should be met.
\begin{itemize}
  \item \textbf{Small client storage:}
  In order to achieve efficient update and privacy requirements, heavy storage costs will be incurred due to storing complex structures.
  Hence, we aim to design a structure with small client storage in our proposed scheme.
  \item \textbf{Efficient deletion:}
  If the deletion operation takes keyword/document pair as input, the computation and communication overhead can be enlarged as the number of keywords in the deleted document increases.
  The solution should minimize the time complexity and communication cost of the delete operation.
  \item \textbf{Non-interaction:} To efficiently update the encrypted database, the data owner maintains a list of keyword states.
  To avoid per-query state transfer between data users and the data owner, we aim to realize non-interactive queries and the search token should be independently generated by a single user.
  \item \textbf{Efficient search:}
 Our scheme attempts to reduce communication and computing costs during keyword search queries to their lowest possible levels. We will improve query efficiency as much as possible.
  \item \textbf{Privacy preservation:}
  The essential requirement of DSSE is privacy preservation, and we need to prevent the server from learning sensitive information.
  Encrypted documents and encrypted indices should be strongly protected.
  Given a search token, the content of query requests should not be learned by the server.
  Even for the same query request, the search token should look random rather than deterministic.
\end{itemize}

\section{Preliminaries}\label{section3}
This section provides some background knowledge and notations utilized in our scheme.

\subsection{Notaions}
In our scheme, we take $\lambda \in \mathbb{N}$ to indicate the security parameter and we denote a negligible function $negl(\lambda)$ in the $\lambda$.
Consider $\{0,1 \}^{l}$ as the collection of binary strings with the length of $l$.
Only the algorithms with polynomial time complexity with respect to $\lambda$ are considered, where we consider an adversary with Probabilistic Polynomial-Time (PPT) algorithms.

The operation $x\stackrel{\$}{\leftarrow} X$ stands for the random sampling from the finite set $X$.
The operator `$||$' and $|X|$ stand for the concatenation of strings and the number of elements of the set $X$, respectively.

Table \ref{tab1} contains a list of additional notations that are required.

\begin{table}[htbp]
  \caption{Notations for describing our DSSE scheme.}
  \begin{center}
    \begin{tabularx}{0.48\textwidth}{lX}
      \hline
      \textbf{Notation}   & \textbf{Description} \\ \hline
      $\enc()$       & the symmetric encryption algorithm  \\
      $\dec()$       & the symmetric decryption algorithm  \\
      $\mathcal{M}_1$,$\mathcal{M}_2$      & secret keys to encrypt matrix  \\
      $DB$          &a database composed of a tuple of documents $DB=\{ doc_i \}_{i=1}^D$\\
      $Docs$         & a document set in one update $Docs \subseteq DB$ \\
      $doc$         & a document $doc=(ind,W_{ind})$ \\
      $ind$         & the identifier of the document $doc$\\
      $W_{ind}$     & a contained set of keywords extracted from $doc$ \\
      $W$           & the keyword set of $DB$, $W=\bigcup_{i=1}^D W_{ind_i}$ \\
      $\iota$       & keyword vector size\\
      $\kappa$      & the length of time vector\\
      $n$           & the size of index and trapdoor vectors,  $n=\iota+\kappa+2$  \\
      $l$           & the number of matrices in the trapdoor set  \\
      $ts$         & the timestamp when index are generated \\
      $tr$         & a time range, from the initial time to the current time \\
      $ctr$         & a global variable, the update counter\\
      $key_w$       & the head block key of the keyword chain $\mathbb{C}_w$\\
      \hline
    \end{tabularx}
    \label{tab1}
  \end{center}
\end{table}

\subsection{Non-interactive Multi-Client DSSE}

A multi-client DSSE scheme $\Pi$ is a triple $\Pi$=(Setup, Search, Update) containing three polynomial-time protocols and involving three entities: data owner, server, and data users. 
\begin{itemize}
  \item \textbf{Setup}$(1^{\lambda}) \rightarrow (msk,\sigma$; EDB):
  This protocol is used to initialize the system.
   The data owner takes as input a security parameter $\lambda$ and outputs ($msk,\sigma$; EDB), where $msk$ is a master secret key, $\sigma$ is the keyword state, and EDB is the initialized encrypted database.
  Finally, the data owner sends $msk$ to data users through a secure channel and sends EDB to the server.
  \item \textbf{Search} ($msk$, $w$; EDB) $\rightarrow$ (DB($w$); EDB$'$):
  This is a protocol between data users and server.
  The data user generates a search token with input ($msk$, $w$) and sends it to the server.
  After execution of the protocol, the server outputs the results as the document set DB($w$) containing the keyword $w$.
  \item \textbf{Update}($op$, $Docs$, $\sigma$; EDB) $\rightarrow$ ($\sigma'$; EDB$'$) : This protocol is run between data owner and server to add (or delete) an entry.
  The data owner takes a document set $Docs$ (or the identifier of the deleted document $ind$) an update operation $op=\{add,del\}$, and a state $\sigma$ as input.
  The input to the server is EDB.
  After running the protocol, the data owner outputs the new state $\sigma'$ and the server modifies EDB.

\end{itemize}

\textbf{Correctness}: If the search protocol consistently provides the correct result DB($w$) for each query, we consider the DSSE scheme is correct.

We also say a DSSE scheme is \textit{non-interactive} if in the search protocol the data user has not interacted with the data user.

\textbf{Security}:
The security of a DSSE scheme should ensure that the server should learn as little information as possible about the encrypted database and queries.
The leaked information is parameterized by the leakage function $\mathcal{L}=(\mathcal{L}_{Setup},\mathcal{L}_{Search},\mathcal{L}_{Update})$
which expresses what information is revealed to the adversary in each protocol.
The definition ensures that a secure DSSE scheme should reveal nothing except what is inferred from the leakage function.

An adversary $\mathcal{A}$ can use its own choosing parameters to trigger protocols at will.
Then, she observes the scenario execution process from the server's perspective and it can obtain transcripts of each operation.
The target of the adversary is to distinguish between a real world \textit{Real} and an ideal world \textit{Ideal} \cite{kamara2012dynamic,bost2017forward}.

\begin{definition}\textit{(Adaptive Secrutiy of DSSE)}\label{Adaptive Secrutiy of DSSE}
A DSSE scheme $\Pi $= (Setup, Search, Update) is \textit{adaptive-secure} with respect to the leakage funciton $\mathcal{L}$, if for any PPT adversary $\mathcal{A}$ issuing polynomial number of queries $q(\lambda)$, there exists a stateful PPT simulator $\mathcal{S}$ such that the following equation holds:
$$|Pr[Real^{\Pi}_{\mathcal{A}}(\lambda)=1]-Pr[Ideal^{\Pi}_{\mathcal{A},\mathcal{S},\mathcal{L}}(\lambda)]| \leq negl(\lambda)$$
\end{definition}
where $Real^{\Pi}_{\mathcal{A}}(\lambda)$ and $Ideal^{\Pi}_{\mathcal{A},\mathcal{S},\mathcal{L}}(\lambda)$ are defined as
\begin{itemize}
    \item $Real^{\Pi}_{\mathcal{A}}(\lambda)$: In the \textit{Real} world, the DSSE scheme is executed in the real case. The adversary $\mathcal{A}$ first selects a database DB and gets an initial encrypted database EDB by running \textit{Setup($1^{\lambda}$)}. 
    Then, $\mathcal{A}$ adaptively performs the search and update protocols depending on the queries $q_i$ and observes the real transcript of all operations. Finally, $\mathcal{A}$  outputs a bit $b$.
    \item $Ideal^{\Pi}_{\mathcal{A},\mathcal{S},\mathcal{L}}(\lambda)$: In the \textit{Ideal} world, $\mathcal{A}$ obtains a simulated transcript generated by a PPT simulator with leakage functions $\mathcal{L}$, instead of the real transcript. After choosing a database DB, $\mathcal{A}$ is given an encrypted database generated by  $\mathcal{S}(\mathcal{L}_{Setup})$. The adversary repeatedly performs search and update queries. $\mathcal{A}$ receives the transcripts $\mathcal{S}(\mathcal{L}_{Search})$ and $\mathcal{S}(\mathcal{L}_{Update})$ generated by the simulator.
    Eventually, $\mathcal{A}$ outputs a bit 0 or 1.
\end{itemize}

\subsection{Leakage Function}
The leakage function $\mathcal{L}$ describes the information revealed during executing protocols.
$\mathcal{L}$ keeps all previous queries  as a state in a list $Q$ which records all timestamps corresponding to the operations for a keyword $w$.
Each entry of the query list $Q$ is a pair ($t$, $w$), where $w$ is a keyword, and $t$ is a timestamp increasing with each query.

Repeated execution of the query will be leaked some standard search leakage types include \textit{search pattern} and \textit{access pattern}.
Search pattern $sp(w)=\{t|(t,w)\in Q\}$ reveals the relevance of searches on the same keyword $w$.
Access pattern $ap(w)=\{DB(w)\}$ reveals the result of searching for $w$.
Note that the access pattern leakage is inevitable (unless using an oblivious RAM) if the client expects to obtain the entity files matched, not just their identifiers.

We additionally provide some leakage functions related to backward privacy.
TimeDB($w$) is a list of inserting timestamps for all documents containing the keyword $w$ that have not been deleted subsequently.
TimeDB($w$) = $\{(t,id)|(t,add,(w,id))\in Q$ and $\forall$ $t',(t',del,(w,id)) \notin Q \}$.
DelHist($w$) is a list of (addition timestamp, deletion timestamp) pairs of documents that have been deleted.
DelHist($w$) = $\{(t^{add}, t^{del})| \exists id:(t^{add}, add, (w,id))\in Q \wedge (t^{del}, del, (w,id)) \in Q) \}$.

\subsection{Forward and Backward Privacy}
Forward and backward privacy of DSSE was first formally defined by Bost et al.\cite{bost2017forward}, which is particularly useful in practice.

\subsubsection{Forward Privacy}
A DSSE scheme is forward private if the newly added document cannot be connected with a  previously searched keyword.
We follow the formal definition in \cite{bost2016ovarphiovarsigma}.
\begin{definition}\label{de_forward}(Forward Privacy)
An $\mathcal{L}$-adaptively-secure DSSE scheme is forward private if the update leakage function $\mathcal{L}_{Update}$ can be written as
$$\mathcal{L}_{Update}(op,Docs)=\mathcal{L}'(op,\{|ind_i|,|W_{ind_i}|\}),$$
where $\mathcal{L}'$ is a stateless function, $op=add/del$, and the set $\{|ind_i|,|W_{ind_i}|\}$ is the number of identifiers and keywords of all updated documents in $Docs$.
\end{definition}

\subsubsection{Backward Privacy}

Backward privacy ensures that if a document $doc$ containing the keyword $w$ is deleted, the subsequent search queries on keyword $w$ should reveal nothing about $doc$. 
Backward privacy is defined as three types according to different leakage types: Type-I to Type-\uppercase\expandafter{\romannumeral3}.
Type-I reveals the least information, whereas Type-III reveals the most.
Using the above functions, Type-\uppercase\expandafter{\romannumeral3} backward privacy is defined as follows.
\begin{definition} (Type-\uppercase\expandafter{\romannumeral3} Backward privacy)
A $\mathcal{L}$-adaptively-secure SSE scheme is Type-\uppercase\expandafter{\romannumeral3} backward private:
\textit{iff} $\mathcal{L}_{Update}$($op$, $Docs$) $=$ $\mathcal{L}'$($op$, $\{ |ind_i|, |W_{ind_i}| \}$) and
$\mathcal{L}_{Search}$($w$) $=$ $\mathcal{L}''$(TimeDB($w$), DelHist($w$)),
where $\mathcal{L}'$ and $\mathcal{L}''$ are stateless functions.
\end{definition}

\section{Lightweight Storage Chain Structure}\label{ICS}
In this section, we present the Lightweight Storage Chain (LSC) structure to build encrypted chains, thus we can store all keyword/document pairs at the server securely.
Most existing DSSE schemes with forward privacy \cite{yang2017rspp,li2019searchable,8329523} have to record all keyword statuses, such as the keyword counter \cite{8329523,ghareh2018new}.
When a data user performs a search query, it needs to interact with the data owner to ask for the latest keyword status to help it generate the search token.
Meanwhile, to execute the delete operation, many backward private schemes \cite{zheng2020achieving,10.1007/978-3-030-88428-4_1,8329523,9724186} takes ($ind$, $w$) pairs as input, thus the data owner has to maintain a complex structure (inverted index or forward index) to record all keyword/document pairs locally.
As a result, the storage consumption of the data owner increases as the number of keyword/document pairs increases.
To address this issue, we develop the LSC structure, which implicitly links all keyword/document pairs and takes small client-side storage.
The data owner only needs to maintain an update counter and the keyword set locally.
The keyword set and an update counter are the only objects the data owner locally kept up.
With this structure, the permanent client storage is $O(1)$.

\begin{figure}[htbp]
  \centering
  \includegraphics[width=0.49\textwidth]{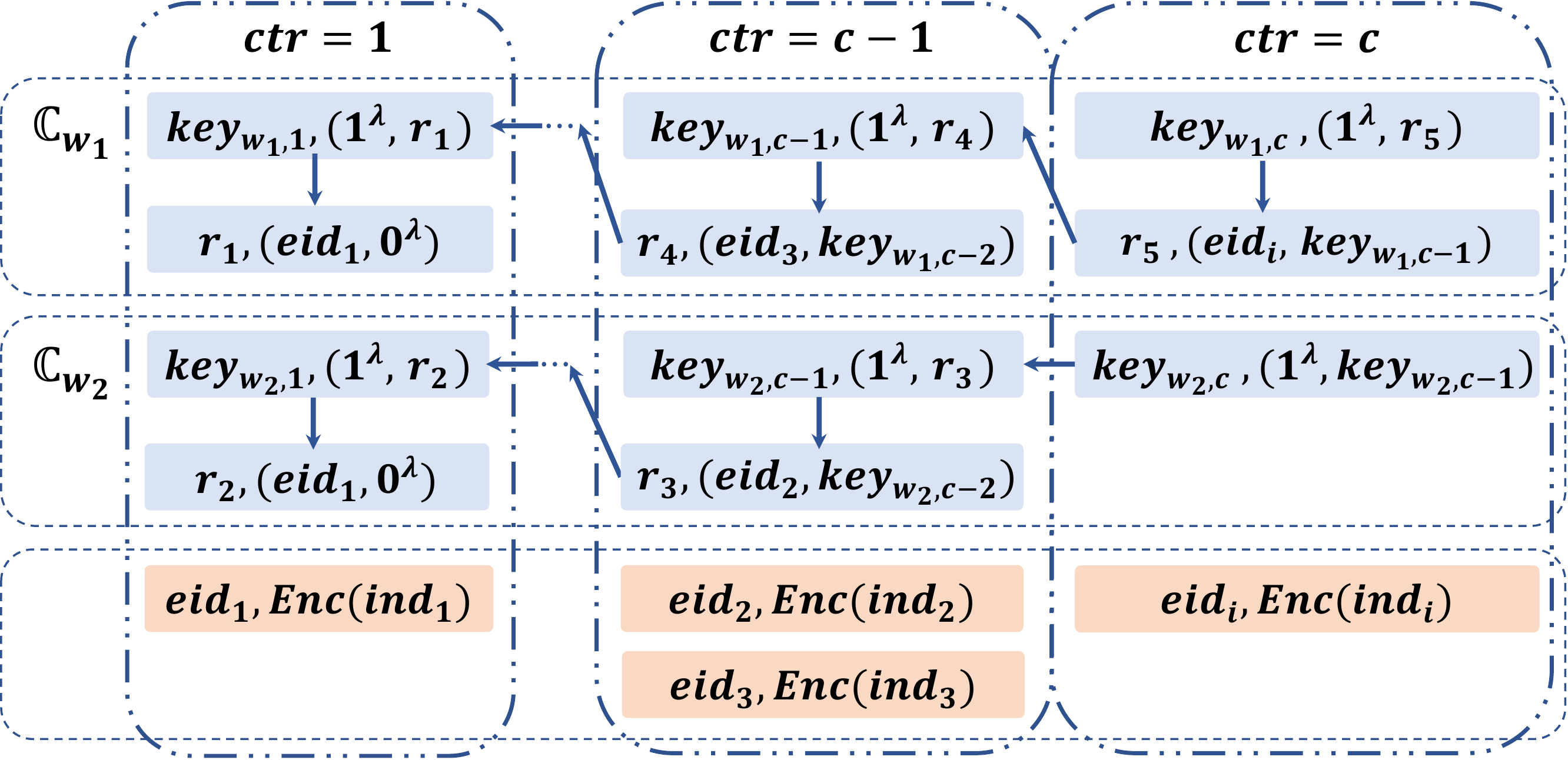}
  \caption{Example of Lightweight Storage Chain Structure.}
  \label{fig:chain}
\end{figure}
For better understanding, we first explain the chain structure under plaintext.
We define a current block $(key, (data, kpr))$, where $key$ is the encryption key and $(data, kpr)$ is the plaintext to be encrypted by $key$.
By setting $kpr$ as the encryption key of the block previous to the current block, the current block is implicitly connected to the previous block.
Let $\mathbb{C}_w$ be a keyword chain containing data blocks corresponding to the same keyword $w$ together.
The last block to be added to the keyword chain $\mathbb{C}_w$ in a single update is referred to as the head block, and its encryption key $key$ is referred to as the head block key.
The head block key can be calculated $\pseudo(ctr,w)$ by using a pseudorandom function.

As shown in Fig.\ref{fig:chain}, we give an example to show how the LSC structure works.
In our scheme, each keyword $w$ corresponds to a chain $\mathbb{C}_w$ and we set a global counter $ctr$ as the update counter.
We assume the keyword set $W$ contains two keywords $\{ w_1,w_2\}$ corresponding to two keyword chains $\mathbb{C}_{w_1}$ and $\mathbb{C}_{w_2}$, respectively.
We describe in detail the procedures for the $c$-th update which adds a document set  $Docs=\{doc_i\}$, where $doc_i = \{ind_i,\{w_1\}\}$.
\begin{itemize}
  \item First, the data owner calculates the head block key $key_{w_1,c-1}$ of the  chain $\mathbb{C}_{w_1}$ and $key_{w_2,c-1}$ of the chain $\mathbb{C}_{w_2}$.
  \item Second, the data owner computes $\enc(ind_i)$ by encrypting the identifier $ind_i$ and generates an address $eid_i$ for it.
  Then, $eid_i$ and $\enc(ind_i)$ form a two-tuple.
  \item Third, for the keyword $w_1$, the data owner randomly chooses a block key $r_5$, and constructs a block $(r_5,(eid_i,key_{w_1,c-1}))$ which implicitly linked to the previous block.
  \item Fourth, after the document $doc_i$ is encoded, the data owner calculates a head block key $key_{w_1,c}$ for the keyword chain $\mathbb{C}_{w_1}$ and $key_{w_2,c}$ for the keyword chain $\mathbb{C}_{w_2}$. Then, she constructs two blocks $(key_{w_1,c},(1^{\lambda},r_5))$ and $(key_{w_2,c},(1^{\lambda},key_{w_2,c-1}))$.
  \item Finally, all blocks are sent to the server for storage, along with the $(eid_i,\enc(ind_i))$ pairs.
\end{itemize}

By using the structure, the data owner only stores a global counter $ctr$.
To delete a document $doc_i$, the data owner calculate the address $eid_i$ of the encrypted identifier as the delete token and the server removes the corresponding $(eid_i,Enc(ind_i))$ pair.

Next, we describe how to encrypt the block and retrieve the keyword chain.
\begin{itemize}
  \item \textbf{EncryptBlock}($\mathbb{C}, key, data, kpr$): The data owner performs this algorithm to encrypt the block with taking ($key,data,kpr$) as input.
  The data owner computes $val$ = $\hash(key||1)\bigoplus (data||kpr)$ and sets an address $addr$ = $\hash(key||0)$. Finally, the block $(addr, val)$ are added to $\mathbb{C}$.
  \item \textbf{RetrieveChain}($\mathbb{C}, key$):
  The server takes the head block key $key$ of a chain $\mathbb{C}$ as input and retrieves all data blocks of $\mathbb{C}$.
  It consists of three steps: 1) finding the block $b = (addr,val)$ by calculating $addr = \hash(key||0)$, 2) decrypting $val$ and recover the previous block key by computing $\hash(key||1)\bigoplus val$, and 3) setting $key$ to $kpr$ and repeat the step 1 and 2 until $kpr=0^\lambda$.
\end{itemize}

\begin{algorithm*}[htbp]{\vspace{-0.5cm}}
	\caption{Boolean Wildcard Matching Algorithm (BWMA)}
	\label{alg:algorithm1}
	\begin{multicols}{2}
	\underline{\textbf{TransIndex}($\overrightarrow{P},m$)}
	\begin{algorithmic}[1]
		\State Given a specific boolean vector $\overrightarrow{P}=\{0,1\}^m$ , initialize a vector $\overrightarrow{P}_t = [\tilde{p}_0, \cdots, \tilde{p}_m, \tilde{p}_{m+1}]$.
        \For{$i = 0 $ to $m$}
            \If{$p_i = 0$}
                \State $\tilde{p}_i \leftarrow 1$
            \ElsIf{$q_i = 1$}
                \State $\tilde{p}_i \leftarrow -1$
            \EndIf
        \EndFor
        \State $\tilde{p}_{m+1} \leftarrow 1$
        \State \Return $\overrightarrow{P}_t$
	\end{algorithmic}

	\underline{\textbf{TransQuery}($\overrightarrow{Q}, m$)}
	\begin{algorithmic}[1]
		\State Given a specific vector $\overrightarrow{Q}=\{0,1,*\}^m$, initialize a vector $\overrightarrow{Q}_t = [\tilde{q}_0, \cdots, \tilde{q}_m, \tilde{q}_{m+1}]$ and a counter $cnt=0$.
        \For{$i = 0$ to $m$}
            \If{$q_i = 0$}
                \State $\tilde{q}_i \leftarrow 1$
            \ElsIf{$q_i = 1$}
                \State $\tilde{q}_i \leftarrow -1$
            \ElsIf{$q_i = *$}
                \State $\tilde{q}_i \leftarrow 0$
                \State $cnt \leftarrow cnt + 1$
            \EndIf
        \EndFor
        \State $\tilde{q}_{m+1} \leftarrow (m-cnt)$
        \State \Return $\overrightarrow{Q}_t$
	\end{algorithmic}

	\underline{\textbf{Match}($\overrightarrow{P}_t$, $\overrightarrow{Q}_t)$}
	\begin{algorithmic}[1]
	
        \State res $\leftarrow$ $\overrightarrow{P}_t \cdot \overrightarrow{Q}_t$
        \If{res = 0}
	        \State \Return True. \Comment $\overrightarrow{P}$ and $\overrightarrow{Q}$ are matched.
	   \Else
	        \State \Return False. \Comment $\overrightarrow{P}$ and $\overrightarrow{Q}$ are not matched.
	   \EndIf
	\end{algorithmic}
	\end{multicols}
 {\vspace{-0.4cm}}
\end{algorithm*}

The client can update the chain $\mathbb{C}$ by adding a new block and search the index by using the head block key.
Based on the LSC structure, the data owner only needs to store a global counter instead of all keyword/document pairs to reduce the storage cost.
Besides, it needs to submit a delete token $eid$ to delete a document which improves the deletion efficiency.
Moreover, all blocks are only logically linked together.
Meanwhile, the cloud server cannot tell which chain a newly added block belongs to, thus privacy preservation is achieved.

\section{Hidden Key Technique}\label{HKT_sec}
\subsection{Overview}
Based on the LSC structure, the data owner stores a global counter locally.
In order to retrieve documents containing a keyword $w$, the data user needs to get the global variable or the head block key of the searched keyword chain $\mathbb{C}_{w}$.
A simple method is that the data owner actively delivers the counter (or all head block keys) after each update or the user asks for the search token.
Thus, frequently interactive processes are required, which causes heavy communication costs and leaks more information.
Another method is that the update counter $ctr$ is encrypted and stored  on the server.
Thus, the data user needs to request the encrypted counter, decrypt it, produce a search token, and return it to the server.
As a result, two rounds of interaction are required.
In addition, forward privacy cannot be achieved.

To address this issue, we propose the indices (HK) technique to support non-interactive queries while protecting more information and providing forward privacy.
In the HK technique, the data owner hides all head block keys $\{key_{w_i,ctr}\}$ into the ciphertext, and the data user queries the ciphertext to get the required head block key.
Meanwhile, it should be ensured that the search token, which is independently generated by the data user, can only search previously added keys rather than later-added keys.
To achieve non-interactive forward privacy, we introduce the time range query since the time is multi-client synchronization without interaction and the latest timestamp cannot be searched by earlier time ranges.
Specifically, the data owner adds a timestamp on the head block key.
In a query, a data user generates a time range from the initial time to the current time and submits it with the searched keyword to query the key.
To perform the time range query, we first convert the range query problem to a boolean wildcard matching problem and then design the Boolean Wildcard Matching Algorithm (BWMA) to settle the problem.

For ease of understanding, we first describe how the BWMA works, and then we introduce how the time range query under plaintext.
Finally, we employ the time range query to construct the HK technique, a multi-client non-interactive method for querying the head block key.

\subsection{BWMA}
The main idea of BWMA is that converts the boolean wildcard vector matching problem into a dot product calculation problem.
The BWMA can be adopted widely in searchable encryption scenarios, such as subset-query and range-query.
BWMA is defined using three PPT algorithms.
The details are shown in Algorithm~\ref{alg:algorithm1}.

For instance, given a boolean vector $\overrightarrow{P}=[1,0,1,1]$ and a boolean wildcard vector $\overrightarrow{Q}=[1,0,*,*]$, we transform $\overrightarrow{P}$ to $ \overrightarrow{P}_t=[-1,1,-1,-1,1]$ using TransIndex($\cdot$) and transform
$\overrightarrow{Q}=[1,0,*,*]$ to $ \overrightarrow{Q}_t=[-1,1,0,0,-2]$ using TransQuery($\cdot$).
Finally, we check if there is a match between two vectors by evaluating the inner product.

\begin{algorithm*}[htbp]{\vspace{-0.4cm}}
	\caption{Hidden key technique}\label{alg:HKT}
	\begin{multicols}{2}
	\underline{$\mathcal{U}_w^*$ $\leftarrow$ \textbf{HKData}($w$, $Ts$, $key_w$, $\mathcal{M}_1$, $\mathcal{M}_2$)}
	    \begin{algorithmic}[1] 
            \State $\overrightarrow{H}$ $\leftarrow$ Binary($\hash(w)$, $\iota$)
            \State $\overrightarrow{T}$ $\leftarrow$ Binary($Ts$, $\kappa$)
            \State $\overrightarrow{P}_t$ $\leftarrow$
            BWMA.TransIndex($\overrightarrow{H}$ + $\overrightarrow{T}$, $n-2$)
            \State $\overrightarrow{U}[1:n-1]$ $\leftarrow$ $r_u\cdot \overrightarrow{P}_t$, $\overrightarrow{U}[n]$ $\leftarrow$ $key_w$, where $r_u$ is a random number larger than $key_w$.
            \State $\mathcal{U}$ $\leftarrow$  GenLowTriMart($\overrightarrow{U}$)
            \State $\mathcal{I}_x$ $\leftarrow$  GenLowTriMart($\overrightarrow{I}$), where $\overrightarrow{I}$ is a $n$-dimensional vector with all elements being 1
            \State $\mathcal{U}_w^*$ $\leftarrow$ $\mathcal{M}_1 \times \mathcal{I}_x \times \mathcal{U} \times \mathcal{I}_x \times \mathcal{M}_2$, where $\mathcal{M}_1$ and $\mathcal{M}_2$ are invertible matrices.
            \State \Return  $\mathcal{U}_w^*$
        \end{algorithmic}
	\underline{$\{\mathcal{Q}_m^*\}_{m=1}^l$ $\leftarrow$ \textbf{HKToken}($w$, $Tr$, $\mathcal{M}_1$, $\mathcal{M}_2$)}
	    \begin{algorithmic}[1] 
            \State $\overrightarrow{H}$ $\leftarrow$ Binary($\hash(w)$, $\iota$)
            \State $\{\overrightarrow{T_m} \}^l_{m=1}$ $\leftarrow$ Wildcard($Tr$, $\kappa$)
            \For{m = $1$ to $l$}
                \State $\overrightarrow{G}_t$ $\leftarrow$
                 BWMA.TransQuery($\overrightarrow{H}$ + $\overrightarrow{T_m}$, $n-2$)
                \State $\overrightarrow{Q_m}[1 : n-1]$ $\leftarrow$ $r_m\cdot \overrightarrow{G}_t$,  $\overrightarrow{Q_m}[n] \leftarrow 1$, where $r_m$ is a random number
                \State $\mathcal{Q}_m$ $\leftarrow$ GenLowTriMart($\overrightarrow{Q_m}$)
                \State $\mathcal{I}_y$ $\leftarrow$ GenLowTriMart($\overrightarrow{I}$)
                \State $\mathcal{Q}_m^* \leftarrow \mathcal{M}_2^{-1}\times \mathcal{I}_y \times \mathcal{Q}_m \times \mathcal{I}_y \times \mathcal{M}_1^{-1}$, where $\mathcal{M}_1^{-1}$ and $\mathcal{M}_2^{-1}$ are the inverse matrices of the secret keys.
            \EndFor
            \State \Return  $\{\mathcal{Q}_m^*\}_{m=1}^l$
        \end{algorithmic}
	\underline{$flag$, $key_w$ $\leftarrow$ \textbf{HKQuery}($\mathcal{U}_w^*$,  $\{\mathcal{Q}_m^*\}_{m=1}^l)$}
	    \begin{algorithmic}[1] 
            \State $flag \leftarrow 0$
            \For{m = $1$ to $l$}
                \State $res \leftarrow tr(\mathcal{U}_w^* \times \mathcal{Q}_m^*)$
                \If{$res$ $\textgreater$ 0}
                    \State $key_w \leftarrow res$, $flag \leftarrow 1$
                    \State break
                \EndIf
            \EndFor
            \State \Return $flag$, $st_w$
        \end{algorithmic}
	\end{multicols}
 {\vspace{-0.3cm}}
\end{algorithm*}

\subsection{Time range query}
Assuming there are a timestamp $T$ and a time range $[T_{min},T_{max}]$, we will check whether the timestamp is within the time range.
In order to improve the matching efficiency, we utilize a binary tree to decrease the amount of elements in range queries.
Specifically, we construct a perfect binary tree with $2^\kappa$ leaf nodes and each node is regarded as a $\kappa$-bit value.
The timestamp $T$ is regarded as a leaf node and its value is denoted as a $\kappa$-bits vector.
$T_{min}$ and $T_{max}$ are considered as two leaf nodes and the range is replaced by the minimum set that can cover the range.
For example, as shown in Fig.\ref{fig:tr}, we check whether timestamp $T=3$ is within the time range $T_r=[0,5]$.
The timestamp $T$ (the green node) is presented as $\overrightarrow{T}=[0,1,1]$ and the time range (the orange nodes) is denoted as $S(T_r)=\{[0,*,*],[1,0,*]\}$.

\begin{figure}[htbp]
  \centering
  \includegraphics[width=0.42\textwidth]{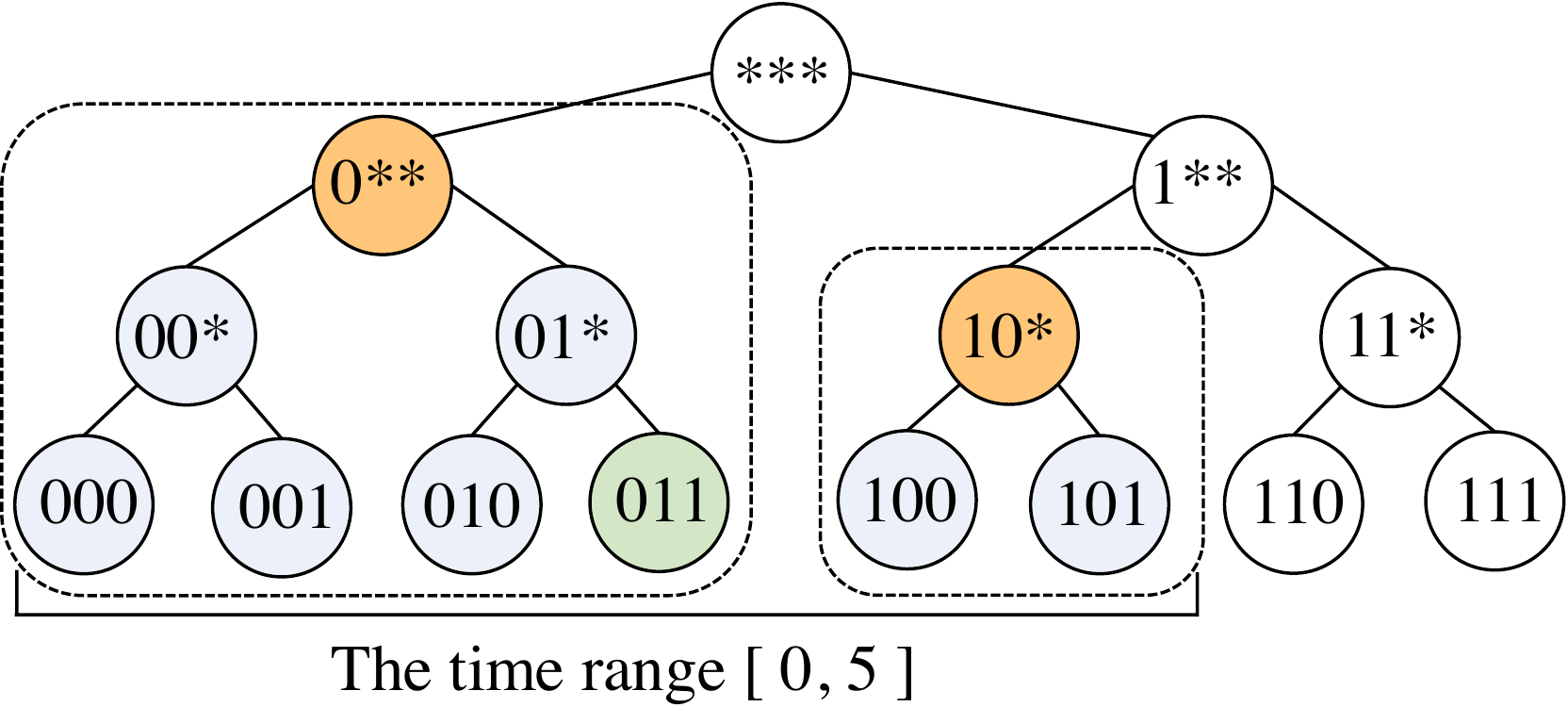}
  \caption{Example of time range query.}
  \label{fig:tr}
\end{figure}

The above transformation enables the range query problem to be formulated as a boolean wildcard matching problem.
In other words, the query is expressed as whether there is a match between the boolean vector $\overrightarrow{T}$ and the wildcard vector set $S(\overrightarrow{T}_r)$.
Finally, we conclude the result by applying BWMA to the timestamp vector and the time range vector.
For example, $\overrightarrow{T}=[0,1,1]$ is transformed to $\overrightarrow{T}_t=[1,-1,-1,1]$ and
$S(T_r)=\{[0,*,*],[1,0,*]\}$ is transformed to $S_t(T_r)=\{\overrightarrow{T_1},\overrightarrow{T_2}\}=\{[1,0,0,-1],[-1,1,0,-2]\}$.
There is a match between the time point $T$ and the time range $T_r$, since the inner product of $\overrightarrow{T}$ and $\overrightarrow{T_1}$ is zero.

\subsection{Details of HK technique}
From a high-level perspective, our design can be interpreted as a two-part search, with the first part being a keyword search and the second part being a time-range query.
If the two parts are matched, the head block key for a certain keyword can be obtained.
In this technique, to encrypt the index vector and token vector,  we leverage the random matrix multiplication technique which enables calculating the inner product of two vectors under the ciphertext.

\begin{figure}[htbp]
  \centering
  \includegraphics[width=0.49\textwidth]{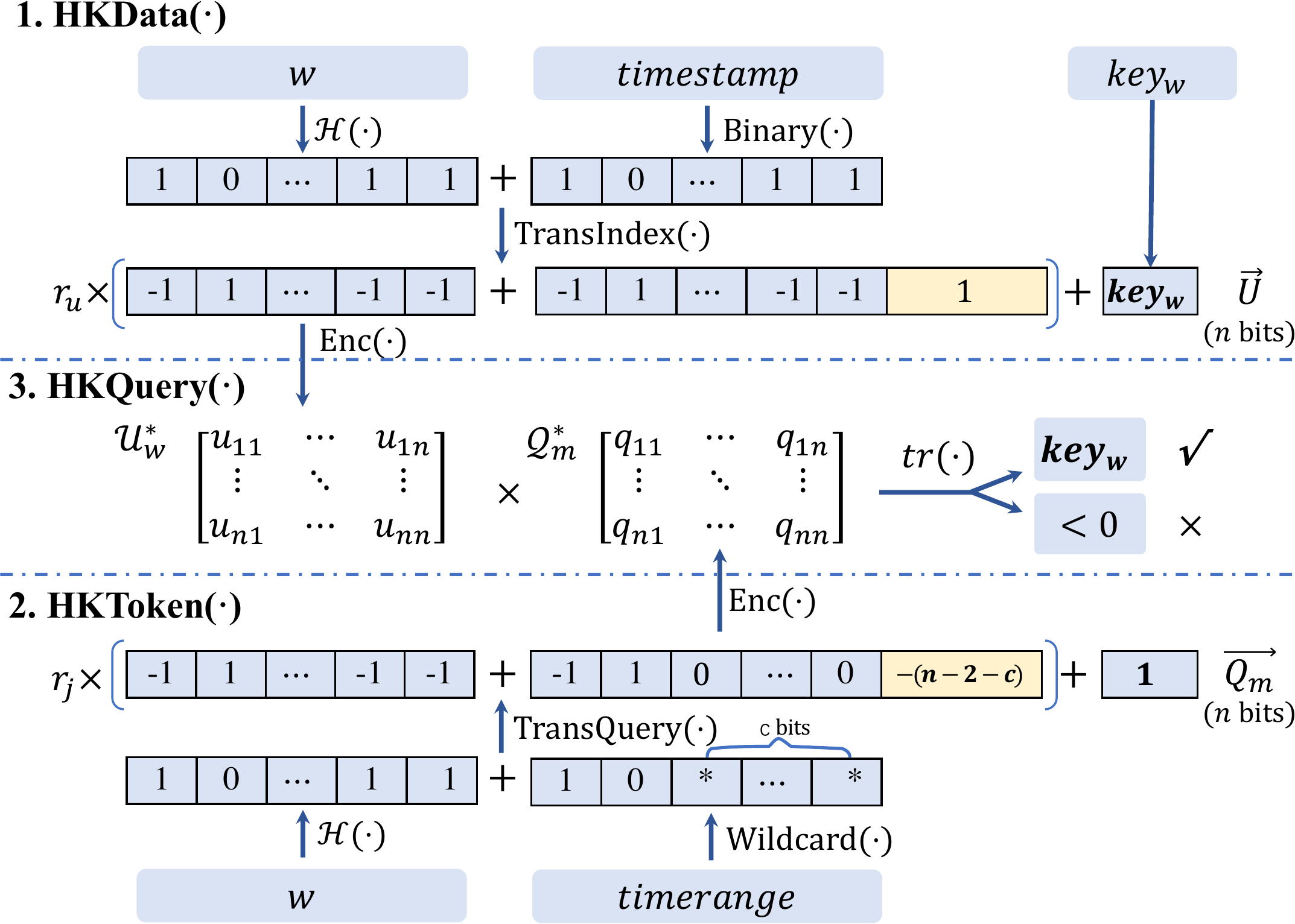}
  \caption{Construction of HKT.}
  \label{fig:HKT}
\end{figure}

We introduce three components of the HK framework with an example as shown in Fig.\ref{fig:HKT}.
The procedures of HK technique are presented in detail in Algorithm \ref{alg:HKT}, where Binary$(D, k)$ is a function that encodes $D$ to a $k$-dimension vector and Wildcard$(R, k)$ is a function that generates a set of wildcard boolean vectors ($k$-dimension) according to a range $R$.
Let GenLowTriMart($\overrightarrow{P}$) be a function that generates a random $n \times n$ lower triangular matrix with the main diagonal being $\overrightarrow{P}=(p_1,p_2,\cdots,p_n)$ as
                $$\mathcal{P} =
                \scriptsize{
            \left[
              \begin{array}{cccc}
                p_1      & 0      & \cdots & 0      \\
                *      & p_2      & \cdots & 0      \\
                \vdots & \vdots & \ddots & \vdots \\
                *      & *      & \cdots & p_n      \\
              \end{array}
              \right].
          }
        $$
where we use $*$ to denote a random value for ease of presentation.
Let $tr(\mathcal{A})$ be the trace of the matrix $\mathcal{A}$.

\textbf{HKData}($\cdot$). In this phase, the data owner hides the head block key $key_w$ in a vector form with the keyword and the timestamp and then encrypts the vector using the random matrix multiplication technique.
Firstly, the data owner transforms the hash value of keyword $w$ and the timestamp $Ts$ to the binary format by executing Binary($\cdot$) encoding, respectively (Line 1-2).
Next, for the concatenation vector of $\overrightarrow{H}$ and $\overrightarrow{T}$, it is encoded by BMWA.TransIndex($\cdot$) (Line 3).
The data owner uses a random number $r_u$ to perturb the vector.
Finally, the key $key_w$ is appended to the last bit of the transformed vector (Line 4) and the vector $\overrightarrow{U}$ is encrypted by using the random matrix multiplication technique  (Line 5-7).
The encryption matrix is sent to the server for storage.

\textbf{HKToken}($\cdot$). In this phase, the data user generates a search token with the queried keyword and a time range from the initial time to the current time.
The process is similar to HKData($\cdot$).
The difference is that the time range is transformed into multiple boolean wildcard vectors and the search token consists of a set of encrypted matrices.

\textbf{HKQuery}($\cdot$).In this phase, the server calculates the trace between the randomly chosen encrypted key and all tokens in the token set.
Only when the keyword is matched and the timestamp is within the time range, the trace will be greater than zero and the result is the head block key of the searched keyword chain.

\textbf{Remark}.
Synchronization of time guarantees that interactions are not needed and newly added timestamps cannot be searched by past time ranges.
Therefore, it is impossible that the previous token searches for the latest head block key, thus achieving forward privacy without interaction.

\section{Scheme Details}\label{section_detail}
In this section, we describe our multi-client non-interactive DSSE scheme with small client storage cost, forward privacy, Type-\uppercase\expandafter{\romannumeral3} backward privacy, and optimal computational efficiency.
\subsection{Overview}
NIMS is comprised of three algorithms: Setup, Update and Search.
In Setup, the data owner initializes the scheme with the system parameter.
In Update, given a document set $Docs$, the data owner encodes all keyword/document pairs using LSC structure based on the update counter $ctr$.
The head block keys of all keyword chains are encrypted to a matrix set $Mat$ via HKData($\cdot$) and stored on the server.
To delete a document with identifier $ind$, the data owner calculates the index address $eid$ by inputing $ind$ and the server removes the $(eid,Enc(ind))$ pair.
In Search, the data user generates a search token set using HKToken($\cdot$) based on the current time.
The server randomly chooses a matrix from $Mat$ and calculates the trace between the matrix and each token matrix in the token set.
Once a trace is greater than zero, the trace is the head block key of the searched keyword.
Then, the server utilizes the head block key to retrieve all of the required nodes in the LSC structure and search for the encrypted document identifier.

\subsection{Construction of NIMS}

\begin{algorithm}
        \caption{$(msk,\sigma ,EDB)$ $\leftarrow$ Setup$(\lambda)$}
        \label{ZSetup}
        \begin{algorithmic}[1] 
            \Require
                \State  Generate $n\times n$ invertible matrices $\mathcal{M}_1$ and $\mathcal{M}_2$ randomly
                \State $\key \stackrel{\$}{\leftarrow} \{0,1 \}^{\lambda}$, $ctr \leftarrow 0$,
                $msk \leftarrow \{\key,\mathcal{M}_1,\mathcal{M}_2\}$
                \State $\sigma \leftarrow (msk, ctr)$
                \State $Mat \leftarrow \varnothing $, $CDB \leftarrow \varnothing$, $EDB \leftarrow (CDB, Mat)$
                \State Send $msk$ to the data users.
                \State Send $EDB$ to the server.
        \end{algorithmic}
\end{algorithm}

Let $\pseudo_1:\{0,1\}^* \times \{0,1\}^*\rightarrow \{0,1\}^\lambda$ and $\pseudo_2: \{0,1\}^\lambda  \times \{0,1\}^*  \rightarrow \{0,1\}^{\lambda}$ be pseudo-random functions.
Consider two hash functions,  $\hash: \{0,1\}^* \rightarrow \{0,1\}^{\iota}$ and $\hash_1: \{0,1\}^* \rightarrow \{0,1\}^{2\lambda}$.

\textbf{Setup:}
Algorithm \ref{ZSetup} gives a formal description.

\begin{itemize}
    \item The data owner first creates a secret key $\key$ for the pseudo-random function $\pseudo_2$ and two $n\times n$ dimensions invertible matrices $\{\mathcal{M}_1,\mathcal{M}_2\}$. A global counter $ctr$ is initialized to 0. The secret keys $msk$ are sent to data users over secure channels.
    \item The data owner initializes an empty table $Mat$ to store the encrypted status matrices and an encrypted index database $CDB$ as a dual dictionary data structure. They are sent to the server to store encrypted entries.
\end{itemize}

\begin{algorithm}
        \caption{$(msk,\sigma;EDB)$ $\leftarrow$ Update($msk$, $add$, $Docs$, $\sigma$; $EDB'$)}
        \label{Zadd}
        \begin{algorithmic}[1] 
            \Require
                \State Parse $\sigma$ as $ctr$, $ctr \leftarrow  ctr +1$, $Nmat \leftarrow \varnothing $, $dic \leftarrow \varnothing$, $Key \leftarrow \varnothing$
                \For{all $w$ in $W$}
                    \State $Key[w] \leftarrow  \pseudo_1(ctr-1, w)$
                \EndFor
                \While{ $|Docs|$ $\neq$ 0}
                    \State $doc \stackrel{\$}{\leftarrow} Docs$
                    \State Parse $doc$ as $(ind, W_{ind})$
                    \State $eid \leftarrow  \pseudo_2(\key, ind)$, $dic \leftarrow dic\cup (eid, \enc(ind))$
                    \While{ $|W_{ind}| \neq 0$}
                        \State $w \stackrel{\$}{\leftarrow} W_{ind}$, $W_{ind} \leftarrow W_{ind} \backslash \{w\}$
		       \If{$w \notin W$}
                     \State $Key[w] \leftarrow 0^\lambda$
			         \State $W \leftarrow W \cup w$
		       \EndIf
                        \State $key \stackrel{\$}{\leftarrow} \{0,1\}^{\lambda}$, $kpr \leftarrow Key[w]$
                        \State $mask \leftarrow \hash_1(key||1)$, $value \leftarrow (eid||kpr)$
                        \State $dic \leftarrow dic\cup (\hash_1(key||0), mask \bigoplus value)$
                        \State $Key[w] \leftarrow key$
                        \State $Docs \leftarrow Docs \backslash \{doc\}$
                    \EndWhile
                \EndWhile
                \For{all $w$ in $W$}
                    \State $key_w \leftarrow \pseudo_1(ctr,w)$, $kpr \leftarrow Key[w]$
                    \State $mask \leftarrow \hash_1(key_w||1)$, $value \leftarrow (1^{\lambda}||kpr)$
                    \State $dic \leftarrow dic\cup (\hash_1(key_w||0), mask \bigoplus value)$
                    \State $mat \leftarrow$ HKData$(w, Ts, key_w, \mathcal{M}_1 \mathcal{M}_2)$
                    \State $Nmat \leftarrow Nmat \cup mat$
                    \State Send $dic$, $Nmat$ to the server
                \EndFor
            \Ensure
                \State $CDB\leftarrow CDB \cup dic$, $Mat \leftarrow Nmat$
                \State $EDB' \leftarrow (CDB, Mat)$
        \end{algorithmic}
\end{algorithm}

\textbf{Add:} In this procedure (Algorithm \ref{Zadd}), the data owner encrypts all keyword/document pairs of the document set $Docs$.

\begin{itemize}
    \item The data owner only holds the state consisting of a secret key $msk$ and an update counter $ctr$ increasing with the update time.
    At initialization, the head block key is uniformly set to be $0^{\lambda}$.
    Otherwise, it first calculates the head block key $Key[w] = \pseudo_1(ctr-1, w)$.
    \item  The data owner randomly picks a document in $Docs$ to generate indices and encrypts the identifier using the symmetric encryption, then it sets an address $eid = \pseudo_2(\key, ind)$ for the identifier $ind$ with the secret key $\key$.
    Each keyword of that document is encoded using LSC structure.
    For a keyword $w\in W_{ind}$, it randomly generates $key$ as the block key, runs the hash function $\hash_1$ twice, and computes $\hash_1(key||0)$ and $\hash_1(key||1)$.
    $\hash_1(key||0)$ is used as the key address and $\hash_1(key||1)$ is XORed with the entry ($eid||kpr$), where $kpr$ is the former block key.
    The utilization of the hash function and XOR implicitly relates all blocks.
    \item After all keyword/document pairs of $Docs$ are updated, for each keyword $w$, the data owner calculates the head block key $key_w = \pseudo_1(ctr, w)$ and then runs the hash function twice to encrypt  $1^{\lambda}$.
    The head block key $key_w$ is encrypted to a matrix $mat$ by HKT.
    The matrix set $Nmat$ is sent to the server with the dictionary $dic$.
    \item The server adds $dic$ to the encrypted database CDB and replaces the matrix set $Mat$ with the received set $Nmat$.
    \item On the server side, the complexity in terms of storage, computation, and communication for the Add operation are all $O(K)$, where $K$ is the number of keyword/document pairs.
\end{itemize}

\begin{algorithm}
        \caption{$(msk,\sigma ,EDB')$ $\leftarrow$ Update($msk$, $delete$, $ind$; $EDB'$)}
        \label{ZDelete}
        \begin{algorithmic}[1] 
            \Require
                \State $eid \leftarrow \pseudo_2(\key, ind)$
                \State Send $eid$ to the server
            \Ensure
                \State $CDB[eid] \leftarrow \varnothing $
        \end{algorithmic}
\end{algorithm}

\textbf{Delete:} The details are described in Algorithm \ref{ZDelete}.
\begin{itemize}
    \item
    To delete a document $doc_{ind}$, a deletion token $eid = \pseudo_2(\key,ind)$ is calculated and sent to the server.
    After receiving the deletion token, the server deletes the entry with address $eid$ (marking it as inaccessible).
    \item Note that the deletion operation only impacts one block in $dic$, because the identifier is not stored in the keyword chain directly.
    Meanwhile, since the data owner only generates one token instead of the number of keywords in the document $doc_{ind}$ ($W_{ind}$), the computation and communication complexity is $O(1)$
\end{itemize}

\begin{algorithm}
        \caption{$DB(w)$ $\leftarrow$ Search$(w,EDB)$}
        \label{ZSearch}
        \begin{algorithmic}[1] 
            \renewcommand{\algorithmicrequire}{\textbf{Data User:}}
            \Require
                \State  $\{\mathcal{Q}_m^*\}_{m=1}^l \leftarrow$ HKToken$(w, Tr, \mathcal{M}_1 \mathcal{M}_2)$
                \State Send $\{\mathcal{Q}_m^*\}_{m=1}^l$ to the Server

            \Ensure
                $\mathcal{X}$ $\leftarrow \varnothing$
                \For{$mat$ in $Mat$}
                    \State $flag, res \leftarrow$ HKQuery($mat$, $\{\mathcal{Q}_m^*\}_{m=1}^l$)
                    \If{$flag$ $=$ 1}
                        \State $key$ $\leftarrow$ $res$
                        \State break
                    \EndIf
                \EndFor

                \While{$key \neq 0^{\lambda}$}
                    \State $addr \leftarrow \hash_1(key||0)$, $mask \leftarrow \hash_1(key||1)$
                    \State $eid||kpr \leftarrow CDB[addr]\bigoplus mask$, $key \leftarrow kpr$
                    \If{$CDB[eid] \neq \varnothing$ and $eid$ $!=$ $1^{\lambda}$}
                        \State $\mathcal{X} \leftarrow \mathcal{X}$ $\cup$ $CDB[eid]$
                    \EndIf
                \EndWhile
                \State Send $\mathcal{X}$ to Data User
            \renewcommand{\algorithmicrequire}{\textbf{Data User:}}
            \Require
                \For{$x$ in $\mathcal{X}$}
                    \State DB($w$) $\leftarrow$ DB($w$) $\cup$ $\dec(x)$
                \EndFor

        \end{algorithmic}
    \end{algorithm}

\textbf{Search:} Algorithm \ref{ZSearch} shows the details of the search operation.
\begin{itemize}
    \item To search for all files that contain a keyword $w$ in the encrypted database EDB, a data user obtains the search token set $\{\mathcal{Q}^*_m\}_{m=1}^l$ by leveraging HKToken($\cdot$), where the time range is derived from the time of the search.
    Then, the server randomly chooses a matrix $mat$ from $Mat$ and calculates the trace between $mat$ and each token $\mathcal{Q}^*_m$ in the token set.
    If there is a result greater than 0, that means the keyword and timestamp match the query criteria and the result is the head block key $key$ of the searched keyword chain.
    \item Once obtaining the $key$, the server can proceed with the search operation of the chain structure.
    The server recovers the address of the encrypted identifier and the previous block key $kpr$.
    The server repeats this process by updating the block key until $key$ is $0^{\lambda}$.
    \item The operation has computation complexity of $O(a_w+|W|)$ and communication complexity of $O(n_w)$, because it employs matrix multiplication to retrieve the head block key and the address of the encrypted identifiers is not actually deleted.
\end{itemize}

\section{Security Analysis}\label{section_security}
In this section, we present the security analysis of our scheme.
We show that the HK technique is IND-CPA secure and our NIMS scheme is IND-CPA secure.

\begin{definition}
Considering a secure pseudo-random function $\pseudo$ and a cryptographic hash function $\hash$,
our scheme is $\mathcal{L}$\textit{-adaptive-secure} under the random oracle model. The collection of leakage functions $\mathcal{L}=(\mathcal{L}_{Setup}, \mathcal{L}_{Update}, \mathcal{L}_{Search})$ is written as follows: \\
\vspace{1 ex}\qquad $\mathcal{L}_{Setup}()=\varnothing,$\\
\vspace{1 ex}\qquad $\mathcal{L}_{Update}(Docs,op)=(op,\sum_{w\in W}|DB(w)|),$\\
\vspace{1 ex}\qquad $\mathcal{L}_{Search}(w)=(sp(w),$ TimeDB($w$), DelHist($w$)).
\end{definition}

\subsection{Security of HK technique}

\begin{definition}(Security of HK technique):
The semantic security of HK technique is defined via an IND-CPA game.
In this game, the adversary $\mathcal{A}$ can get the ciphertext of any message.
In particular, the game is described as follows:

\begin{itemize}
    \item \textbf{Setup}:
    The challenger $\mathcal{C}$ runs $Setup(1^{\lambda})$ to generate the master secret key $msk=\{ \mathcal{M}_1,  \mathcal{M}_2 \}$.

    \item \textbf{Phase 1}: $\mathcal{A}$ adaptively chooses several requests $P_{v_j}^{HK}$, for $j\in [q_1]$. On the $j$-th ciphertext request,  $\mathcal{A}$ submits an encryption query $v_j=(w, t, u)$ to the challenger $\mathcal{C}$.
    $\mathcal{C}$ responds with the ciphertext $\mathcal{P}_{v_j}^*$ via HKData$(v_j)$ (HKToken$(v_j)$).

    \item \textbf{Challenge}: On input messages $v_0$, $v_1$, $\mathcal{C}$ selects $b \in \{0,1\}$ and calculates the ciphertext $\mathcal{P}_{v_b}^*$ via HKData$(v_b)$ (HKToken$(v_b)$).

    \item \textbf{Phase 2}: $\mathcal{A}$ repeats the operations in \textbf{Phase 1}, and $\mathcal{C}$ responds with $\mathcal{P}_{v_j}^*$ for $q_1+1 \leq j \leq q_2$ as described above.

    \item \textbf{Guess}: $\mathcal{A}$ returns a guess $b'$ of $b$.

\end{itemize}

Our predicate-only HK construction is said to be IND-CPA secure
if for any PPT adversary $\mathcal{A}$, the advantage $Adv^{IND-CPA}_{HK,{\mathcal{A}}}(1^{\lambda})$ of $\mathcal{A}$ is negligible in $\lambda$, where
$$Adv^{IND-CPA}_{HK, {\mathcal{A}}}(\lambda)=\vert Pr[b'=b]- \frac{1}{2} \vert \leq negl(\lambda).$$
Here, $\lambda$ is the security parameter and $negl(\lambda)$ is a negligible function that takes $\lambda$ as a parameter.
\end{definition}

\begin{theorem}
HK technique is IND-CPA secure.
\end{theorem}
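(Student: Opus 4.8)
The plan is to prove the statement through a game-based reduction: assume a PPT adversary $\mathcal{A}$ wins the IND-CPA game with non-negligible advantage, and derive a contradiction by showing that the challenge ciphertext carries essentially no information about the bit $b$. The central object to analyze is the ciphertext produced by HKData$(\cdot)$, namely $\mathcal{U}_w^* = \mathcal{M}_1 \mathcal{I}_x \mathcal{U} \mathcal{I}_x \mathcal{M}_2$ (and symmetrically the HKToken output $\mathcal{Q}_m^* = \mathcal{M}_2^{-1}\mathcal{I}_y\mathcal{Q}_m\mathcal{I}_y\mathcal{M}_1^{-1}$). The first observation I would record is that a product of lower-triangular matrices is again lower-triangular, so $A := \mathcal{I}_x \mathcal{U} \mathcal{I}_x$ is lower-triangular with the same diagonal as $\mathcal{U}$, i.e.\ the entries $r_u\tilde{p}_i$ together with $key_w$. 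Since each $\tilde{p}_i\in\{1,-1\}$, $r_u\neq 0$, and $key_w=\pseudo_1(ctr,w)$ is nonzero except with negligible probability, every diagonal entry is nonzero, so $\mathcal{U}$ and hence $A$ are invertible with overwhelming probability.

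The core step is a blinding (fiber-counting) argument for a single ciphertext. Working over the finite domain in which the matrix entries are modeled, I would show that for any two challenge messages $v_0,v_1$ whose encodings yield invertible core matrices $A_0,A_1$, the map $(\mathcal{M}_1,\mathcal{M}_2)\mapsto(\mathcal{M}_1 A_0 A_1^{-1},\mathcal{M}_2)$ is a bijection between the set of key pairs producing a fixed ciphertext $C$ from $A_0$ and those producing the same $C$ from $A_1$. Because the secret keys $\mathcal{M}_1,\mathcal{M}_2$ are drawn uniformly from the invertible matrices and are unknown to $\mathcal{A}$, equal fiber sizes force $\Pr[\mathcal{M}_1 A_0 \mathcal{M}_2 = C]=\Pr[\mathcal{M}_1 A_1 \mathcal{M}_2 = C]$ for every $C$; thus the challenge ciphertext is distributed identically under $b=0$ and $b=1$, and $\mathcal{A}$'s advantage on a lone challenge is zero. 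I would also fold in the fresh per-encryption randomness ($r_u$ and the random off-diagonal $*$ entries produced by GenLowTriMart), which guarantees that even encryptions of the same message appear freshly randomized.

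To lift this to the full CPA game, where the same secret $\mathcal{M}_1,\mathcal{M}_2$ are reused across all Phase 1 and Phase 2 queries, I would run a hybrid argument that replaces the challenge ciphertext by a freshly sampled matrix of the correct blinded lower-triangular form, bounding each hop using the per-challenge randomness conditioned on all other (adversarially chosen, hence known) query transcripts. The reduction must also check consistency with HKQuery: the product $tr(\mathcal{U}_w^*\times\mathcal{Q}_m^*)$ telescopes, via $\mathcal{M}_2\mathcal{M}_2^{-1}=I$, cyclic invariance of the trace, and $\mathcal{M}_1^{-1}\mathcal{M}_1=I$, to a value determined solely by the BWMA inner product and $key_w$, so that the only quantity the adversary can legitimately extract is the intended predicate outcome.

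I expect the main obstacle to be exactly this last lift. The clean fiber-counting bijection gives perfect secrecy for one ciphertext but breaks under key reuse, since a single pair $(\mathcal{M}_1,\mathcal{M}_2)$ must simultaneously explain every queried ciphertext, and conjugation-invariant products such as $\mathcal{U}^*_k(\mathcal{U}^*_{k'})^{-1}=\mathcal{M}_1 A_k A_{k'}^{-1}\mathcal{M}_1^{-1}$ can expose ratios of the plaintext-dependent diagonal entries. Handling this rigorously will require either restricting to admissible query sets---those for which $v_0$ and $v_1$ induce identical predicate outcomes against all queried tokens, the standard admissibility condition for predicate-only constructions---or quantifying the residual statistical distance contributed by the reused mask and bounding it by $negl(\lambda)$ using the entropy injected by $r_u$ and the random lower-triangular perturbations. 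I would therefore concentrate the technical effort on a precise statement of admissibility and on the statistical-distance bound for the reused-key hybrids, treating the single-ciphertext blinding and the invertibility facts as routine ingredients.
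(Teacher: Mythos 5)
Your route is genuinely different from the paper's. The paper's proof never performs a reduction or a distribution-level argument: it expands the entries of $\mathcal{D}=\mathcal{I}_x\mathcal{P}_{v_b}\mathcal{I}_x$ and of $\mathcal{M}_1\mathcal{D}\mathcal{M}_2$, observes that fresh choices of $r_u$ and of the off-diagonal entries of $\mathcal{I}_x$ make two encryptions of the same message differ, and concludes from this that the ciphertext ``appears random'' and hence that the adversary's advantage is negligible. Your single-ciphertext fiber-counting lemma (the bijection $(\mathcal{M}_1,\mathcal{M}_2)\mapsto(\mathcal{M}_1A_0A_1^{-1},\mathcal{M}_2)$ on the preimage of a fixed ciphertext) is a stronger and more meaningful statement: it yields an exact equality of ciphertext distributions for one challenge, which is what an IND-CPA argument actually needs, whereas ``ciphertexts are re-randomized'' does not by itself imply indistinguishability.

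However, your proposal is not a complete proof, and the gap sits exactly where you placed it. Because $\mathcal{M}_1,\mathcal{M}_2$ are reused across all queries, the quantity $\mathcal{U}^*_k(\mathcal{U}^*_{k'})^{-1}=\mathcal{M}_1A_kA_{k'}^{-1}\mathcal{M}_1^{-1}$ is computable by the adversary, and its spectrum equals the multiset of diagonal ratios of the lower-triangular matrix $A_kA_{k'}^{-1}$, i.e.\ values of the form $r_u\tilde p_i/(r_{u'}\tilde p'_i)$ and $key_w/key_{w'}$. Since $\tilde p_i\in\{1,-1\}$ encodes the message bits and the blinding scalars $r_u,r_{u'}$ are positive, the signs of these eigenvalues reveal (up to permutation of the multiset) in how many positions the two encoded messages agree. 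For challenge pairs $v_0,v_1$ chosen to disagree with a Phase-1 query in different numbers of positions, this is a concrete distinguishing strategy, not merely a technical obstruction; the statistical-distance bound you hope to extract from $r_u$ and the off-diagonal randomness cannot exist in the unrestricted game. The paper's own proof does not confront this multi-query issue either---it only argues freshness of a single ciphertext---so your plan, if completed under a precisely stated admissibility condition on the challenge pair, would in effect be repairing the theorem rather than reproving it. As written, though, the lift from one ciphertext to the full CPA game is missing, and that is the load-bearing step.
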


\begin{proof}
We give the proof of the indistinguishability of ciphertexts $\mathcal{P}^*_{v_0}$ and $\mathcal{P}^*_{v_1}$.
Given a query $v_b=(w,t,u)$, the keyword $w$ and the timestamp (time range) $t$ is transformed to an integer vector $\overrightarrow{A}=[a_1,\cdots,a_{n-1}]$ by leverage algorithms BWMA and Binary($\cdot$).
Suppose the combination vector $\overrightarrow{P}_{v_b}=[r\cdot (\overrightarrow{A}), u]$ is the index vector before encryption.
According to $GenLowTriMart(\cdot)$, randomly generate an $n \times n$-dimensions lower triangular matrix $\mathcal{P}_{v_b}$ where the main diagonal elements are filled with the index vector $\overrightarrow{P}_{v_b}$.
This matrix is then encrypted as $\mathcal{P}_{v_b}^*=\mathcal{M}_1 \times \mathcal{I}_x \times \mathcal{P}_{v_b} \times \mathcal{I}_x \times \mathcal{M}_2$.

According to the law of matrix multiplication, denote the product of $\mathcal{I}_x$, $\mathcal{P}_{v_b}$, and $\mathcal{I}_x$ as $\mathcal{D}$.
The element $d_{i,j}$ ($ i,j \in [1,n]$) of the matrix $\mathcal{D}$ is computed as
  \begin{equation}
    \begin{aligned}
      d_{i,j}= \sum_{k = 1}^{n}\sum_{m = 1}^{n}x_{i,k}v_{k,m}x_{m,j},
    \end{aligned}
  \end{equation}
  where $x$ and $v$ are the elements in $\mathcal{I}_x$ and $\mathcal{P}_{v_b}$,  respectively.
  We can observe that the element $d_{i,j}$ satisfies
  \begin{equation}
    \left\{
    \begin{array}{lc}
      d_{i,j}=0                       & 1 \leq i < j  \leq n , \\
      d_{i,j}=\overrightarrow{P}_{v_b}(i) & 1 \leq i = j  \leq n , \\
      d_{i,j}=*                       & otherwise.
    \end{array}
    \right.
  \end{equation}
  Here, $*$ denotes a fixed random value.
Since $r_u$ and the elements of the lower invertible matrix $\mathcal{I}_x$ are randomly selected by $\mathcal{C}$, the matrix $\mathcal{D}$ is different even if the same message is selected.

  Then, we compute $\mathcal{P}^*_{v_b}=\mathcal{M}_1 \times \mathcal{D} \times \mathcal{M}_2$.
  The element $p_{i,j}$ ($ i,j \in [1,n]$) of the matrix $\mathcal{P}_{v_b}$ is computed as
  \begin{equation}
    \begin{aligned}
      p_{i,j}= \sum_{k = 1}^{n}\sum_{o = 1}^{n}m^{(1)}_{i,k}d_{k,o}m^{(2)}_{o,j},
    \end{aligned}
  \end{equation}

  Without loss of generality,  we can observe two elements of the matrix $\mathcal{P}^*_{v_b}$,  i.e.,  $p_{1,1}$ and $p_{n,n}$, where $$p_{1,1}= \sum_{k = 1}^{n}\sum_{o = 1}^{n}m^{(1)}_{1,k}d_{k,o}m^{(2)}_{o,1},$$
  $$p_{n,n}= \sum_{k = 1}^{n}\sum_{o = 1}^{n}m^{(1)}_{n,k}d_{k,o}m^{(2)}_{o,n}.$$
  We note that $p_{1,1}$ is composed of the multiplication of each element $d_{k,o}$ and two fixed constants $m^{(1)}_{1,k}$, $m^{(1)}_{o,1}$.
  $p_{n,n}$ is the same as $p_{1,1}$.
  Even if the elements of vector $\mathcal{M}_1$ and $\mathcal{M}_2$ are changeless, we can obtain different values as long as one element of the matrix $\mathcal{D}$ is different.
  In summary, the ciphertext $\mathcal{P}^*_{v_b}$ is not the same even if the same message $(w,t,u)$ is chosen.

  In Phase 1 and Phase 2 of the security game,  $\mathcal{A}$ selects different message $v_j=(w,t,u)$ each time and obtains the corresponding ciphertext $\mathcal{P}^*_{v_j}$, $j \leq q_2$.
  According to our analysis, $\mathcal{P}^*_{v_j}$ is a random matrix, and the ciphertexts appear random to $\mathcal{A}$. As a result, the adversary $\mathcal{A}$ cannot distinguish which message is encrypted,  given a ciphertext encrypted by using the message selected by $\mathcal{A}$. Hence,  we have $$Adv^{IND-CPA}_{\Sigma,{\mathcal{A}}}(\lambda)=\vert Pr[b'=b] - \frac{1}{2} \vert \leq negl(\lambda).$$
\end{proof}

\subsection{Security Analysis of NIMS}
We prove that  the security of NIMS is IND-CPA secure in the random oracle model, under the security of the underlying primitives.
\begin{theorem}
Assuming $\pseudo_1$ and $\pseudo_2$ are secure pseudo-random functions, $\hash$ is a cryptographic hash function, and HK technique is IND-CPA secure, NIMS is $\mathcal{L}$\textit{-adaptive-secure} in the random oracle model.
\end{theorem}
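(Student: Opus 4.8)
The plan is to exhibit a stateful PPT simulator $\mathcal{S}$ that, given only the outputs of $\mathcal{L}=(\mathcal{L}_{Setup},\mathcal{L}_{Update},\mathcal{L}_{Search})$, produces a transcript indistinguishable from the real one, and to establish this through a standard sequence of game hops $G_0,\dots,G_4$ bridging $Real^{\Pi}_{\mathcal{A}}(\lambda)$ and $Ideal^{\Pi}_{\mathcal{A},\mathcal{S},\mathcal{L}}(\lambda)$. I would start from $G_0$, the real experiment, in which $\mathcal{A}$ adaptively drives \textbf{Setup}, \textbf{Update}, and \textbf{Search} and observes every transcript from the server's view. The goal is to argue that each successive game is computationally close to its predecessor, so that the total advantage is bounded by a sum of negligible terms.

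First I would remove the secret primitives one layer at a time. In $G_1$ I replace every evaluation of $\pseudo_1$ (which produces the head block keys $key_w=\pseudo_1(ctr,w)$ and the chained keys $Key[w]$) by a lazily sampled random function, so that $|\Pr[G_0=1]-\Pr[G_1=1]|$ is bounded by the PRF advantage against $\pseudo_1$. In $G_2$ I do the same for $\pseudo_2$, which governs the deletion addresses $eid=\pseudo_2(\key,ind)$, again bounded by the PRF advantage against $\pseudo_2$. In $G_3$ I model $\hash$ and $\hash_1$ as random oracles answered by lazy sampling: during each \textbf{Update} the simulator, which knows from $\mathcal{L}_{Update}$ only $op$ and $\sum_{w\in W}|DB(w)|$, writes uniformly random addresses $\hash_1(key\|0)$ and uniformly random masked values $\hash_1(key\|1)\oplus(eid\|kpr)$ into $dic$, together with freshly sampled $eid$ entries. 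This view is independent of the actual keywords, which is exactly what forward privacy demands; the two games differ only if $\mathcal{A}$ queries the oracle on a point $\mathcal{S}$ later needs to program, an event of negligible probability by a union bound over the polynomially many queries. In $G_4$ I replace the HK outputs: every matrix placed in $Mat$ by HKData and every token matrix returned by HKToken is generated by a simulated encryption that ignores the true $(w,Ts,key_w)$, and indistinguishability follows by a hybrid reduction to the IND-CPA security of the HK technique (Theorem 1), where $\mathcal{S}$ uses $sp(w)$ to keep repeated searches on the same keyword mutually consistent. The final game $G_4$ depends only on $\mathcal{L}$, so it is precisely the ideal experiment realized by $\mathcal{S}$.

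The delicate step, and the one I expect to be the main obstacle, is the adaptive programming of the random oracle in $G_3$ coupled with the HK simulation in $G_4$ at \emph{search} time. When $\mathcal{A}$ issues \textbf{Search}($w$), the simulator learns $sp(w)$, $\mathrm{TimeDB}(w)$, and $\mathrm{DelHist}(w)$ for the first time; it must then retroactively program $\hash_1$ on the (previously random) chain addresses so that \textbf{RetrieveChain}, started from the recovered $key$, walks exactly the blocks corresponding to the undeleted insertions recorded in $\mathrm{TimeDB}(w)$ and terminates at $0^\lambda$, while the deleted entries flagged by $\mathrm{DelHist}(w)$ yield inaccessible $eid$ addresses. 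Simultaneously, $\mathcal{S}$ must arrange the simulated HK token set so that HKQuery returns the correct simulated head block key against the randomly chosen matrix in $Mat$, reproducing the real $flag=1$/$res$ behaviour. Consistency across repeated searches on the same $w$ (as dictated by $sp(w)$) and across the newer-timestamp exclusion that gives forward privacy must be maintained; the argument that programming never conflicts with an earlier oracle answer, and that the simulated $res$ is distributed as in the real chain, is where the bulk of the technical care lies.

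Finally I would collect the bounds, writing
\begin{equation}
\bigl|\Pr[Real^{\Pi}_{\mathcal{A}}(\lambda)=1]-\Pr[Ideal^{\Pi}_{\mathcal{A},\mathcal{S},\mathcal{L}}(\lambda)=1]\bigr|\le \sum_{i=0}^{3}\bigl|\Pr[G_i=1]-\Pr[G_{i+1}=1]\bigr|,
\end{equation}
and bound each summand by the corresponding PRF advantage, the oracle-collision probability, and the HK IND-CPA advantage, each negligible in $\lambda$. Summing the polynomially many negligible terms yields a negligible total, which establishes that NIMS is $\mathcal{L}$-adaptive-secure in the random oracle model.
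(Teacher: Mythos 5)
Your proposal follows essentially the same route as the paper's proof: a game-hopping argument that first replaces $\pseudo_1$ and $\pseudo_2$ by random functions, then models $\hash_1$ as a lazily programmed random oracle, then replaces the HK matrices by simulator-generated ones via the IND-CPA reduction, and finally observes that the resulting view is computable from $\mathcal{L}$ alone (using $sp(w)$ for consistency across repeated searches). The only differences are cosmetic --- you split the two PRF replacements into separate hops where the paper merges them into one game, and you spell out the search-time random-oracle programming that the paper only gestures at --- so the argument and the final bound coincide with the paper's.
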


\begin{proof}
We prove the security of NIMS by defining a series of games.
In the following, we denote the i-th game by $Game_i$ and $Pr_{G_i}[E]$ denotes the probability that an event E occurs in $Game_i$.

$Game_0$: The game is the same as the real-world DSSE security game $Real$.
$$Pr[Real_{\mathcal{A}}^{NIMS}(\lambda)=1]=Pr[G_0=1]$$

$Game_1$: Different from the previous game, we replace the calls to the pseudo-random function $\pseudo_1$ and $\pseudo_2$ by picking random elements of the appropriate range.
We maintain a table \textit{Key} to store $(ctr,w, key_w)$ pairs and a table $\textit{ID}$ to store $(ind,eid)$ pairs.
In the update/search protocol, when operating on $(ctr,w)$, the experiment first checks whether there has existed $(ctr,w,key_w)$ in \textit{Key}.
If there is an entry in the table, the corresponding value can be retrieved directly from it,  otherwise, a random output $key_w$ replaces the call to the pseudo-random function and it is stored in a table.
The search on $ind$ is as same as $(ctr,w)$.
From the adversary's perspective, the advantage in distinguishing between $Game_1$ and $Game_0$ is equal to that between a pseudo-random function $\pseudo$ and a truly random function.
We conduct a reduction ${B}_1$ to distinguish between $\pseudo$ and a truly random function, such that
$$Pr[G_1 =1] - Pr[G_0 = 1] \leq 2 Adv_{\pseudo,\mathcal{B}_{1}}^{PRF}(\lambda)$$

$Game_2$: The difference with Game2 is that the hash function $\hash_1$ in the update protocol is replaced by random picking strings.
We maintain a tables $H_1$ to record ($key||0$, $addr$) pairs and ($key||1$, $mask$) pairs.
These strings are used to answer the random oracle in the search query.
When the same input is recalled to $\hash$ and $\hash_1$, the experiment retrieves the table and returns the result directly.
The behavior of $Game_2$ and $Game_1$ is identical to the adversary, except that some probabilistic inconsistencies can be observed in $Game_2$.
There exist a  reduction $\mathcal{B}_{2}$ such that
$$Pr[G_2 =1] - Pr[G_1 = 1] \leq 2 Adv_{H,\mathcal{B}_{2}}^{hash}(\lambda)$$

$Game_3$: In this game, the index matrix and trapdoor matrices are generated by running the simulator $\mathcal{S}_{HK}$ of HK technique.
The index matrix $\mathcal{I}^*$ is generated by random choosing $n\times n$-dimensions matrix.
The trapdoor matrices are generated the same as the index matrix.
By the description of $Game_3$ and $Game_2$, we know that the real game of HK technique with $\mathcal{B}_3$ perfectly simulates $Game_2$, so we have that
$$Pr[G_3 =1] - Pr[G_2 = 1] \leq  Adv_{\mathcal{B}_{3}}^{HK}(\lambda)$$

$Simulator$: A view is generated only given the leakage function by the simulator, where the value of LSC structure and HK technique are randomly picked.
To avoid explicitly using the keyword $w$, we replace $w$ with $min$ $sp(w)$, which denotes the first index at which $w$ emerges in the search pattern.
Hence, we have
$$Pr[G_3 =1] = Pr[Ideal_{\mathcal{A}}^{NIMS} = 1] $$

$Conclusion$: By combining all games, there exists three adversaries $\mathcal{B}_1$, $\mathcal{B}_2$, and $\mathcal{B}_3$ such that

 \begin{equation}
    \begin{aligned}
     &Pr[Real_{\mathcal{A}}^{NIMS}(\lambda) =1]
      = Pr[Ideal_{\mathcal{A},\mathcal{S},\mathcal{L}}^{NIMS}(\lambda) = 1]
     \\
     & \leq 2Adv_{F,\mathcal{B}_{1}}^{PRF}(\lambda) +  2Adv_{H,\mathcal{B}_{2}}^{hash}(\lambda)
     +Adv_{\mathcal{B}_{3}}^{HK}(\lambda).
     \end{aligned}
\end{equation}

\end {proof}

\section{Performance Evaluation}\label{section_experiment}
In this section,  we evaluate our scheme and compare it with five related schemes: Janus++\cite{10.1145/3243734.3243782}, CLOSE-FB\cite{9237959}, FAST\cite{8329523}, FASTIO\cite{8329523}, and Bestie\cite{10.1007/978-3-030-88428-4_1}. All schemes achieve forward privacy while NIMS, CLOS-FB, and Bestie also achieve backward privacy.

\subsection{Implementation and Dataset}

We implemented our experiment in Python 3 and used pycrypto to implement cryptographic primitives.
We used the pseudo-random function $F_1$, $F_2$ with HMAC-SHA-256.
The hash function $H_1$ and $H_2$ were instantiated using HMAC-SHA-512.
The SE was instantiated using AES.
We set the maximum number $d=5$ of tags to be punctured in Janus++ and use the maximum length  \textit{CLen} = 1200 of Fish-Bone Chain in CLOSE-FB.
Our experiment was run on the Intel Core i7 CPU system at 2.6 GHz and 16 GB RAM.
We configure FAST, FASTIO, CLOSE-FB and Janus++ for single client setup.
To demonstrate our scheme is multi-client non-interactive, we developed  NIMS with three separate data owner, data user, and server processes.

\begin{table}[htbp]
  \caption{Database size.}
  \centering
  \begin{tabular}{cccc}
    \hline
    \specialrule{0em}{1pt}{1pt}
     & $DB_1$ & $DB_2$  \\ \hline
    \specialrule{0em}{1pt}{1pt}
    Keywords   & 5,000      & 13,475      \\
    \specialrule{0em}{1pt}{1pt}
    Keyword/document pairs & 6,032,672     &11,451,557      \\
    \specialrule{0em}{1pt}{1pt}
    Documents  & 12,000   & 23,000   \\ \hline
  \end{tabular}
  \label{table:database}
\end{table}

In the experiments, we chose a fraction of the Enron Email Dataset as the dataset.
First, we employ RAKE to extract the keywords from the dataset, and then randomly select 23000 documents and extracted 13,475 keywords to construct keyword/document pairs. Finally, we obtain 11,451,557 keyword/document data pairs in total.
Meanwhile, we also built a database based on the Enron email dataset to investigate the effects of database size.
In Table \ref{table:database}, all of the databases that we used in our research are described in detail.
In our evaluation, the results of each experiment are averaged from 10 tests.
To demonstrate the performance of our proposed scheme, we have performed a comprehensive evaluation in terms of computation time and communication overhead.

\subsection{Evaluation of Addition}

\begin{table}[htbp]
  \caption{Comparison of client-side storage cost.}
  \centering
  \begin{tabular}{cccc}
    \hline
    \specialrule{0em}{1pt}{1pt}
    Scheme & $DB_1$ & $DB_2$ \\ \hline
    \specialrule{0em}{1pt}{1pt}
    Janus++  &34,253KB   & 72,179KB \\
    \specialrule{0em}{1pt}{1pt}
    Bestie  & 29,242KB      & 63,158KB      \\
   \specialrule{0em}{1pt}{1pt}
    FASTIO   & 31,051KB      & 66,617KB      \\
    \specialrule{0em}{1pt}{1pt}
    CLOSE-FB   & 1KB     & 1KB     \\
    \specialrule{0em}{1pt}{1pt}
    NIMS  & 18KB   & 40KB   \\ \hline
  \end{tabular}
  \label{table: Storage cost}
\end{table}

Table \ref{table: Storage cost} displays the client storage comparison of NIMS with other schemes before any deletions.
When implementing FAST, FASTIO, Bestie, and Janus++, we noticed that these schemes have to provide all keywords of the deleted document to generate update tokens.
Therefore,  an inverted index (or a forward index) is necessary to be stored on the client side.
In contrast, our scheme only stores a global variable and the keyword set.
When performing the deletion, our scheme only uploads the address of the encrypted identifier.
In FAST, FASTIO, and Bestie, the client has to hold a map for storing states except for the inverted index.
In Janus++, the client maintains two dictionaries for addition and deletion and a local key share for puncturable encryption.
From Table \ref{table: Storage cost}, it can be observed that the storage cost of FAST, FASTO, Bestie, and Janus++ are enlarged as the size of the database increases. On the contrary, the storage cost of NIMS and CLOSE-FB are small for client storage.
According to Table \ref{table: Storage cost}, NIMS and CLOSE-FB have small client storage costs while FAST, FASTO, Bestie, and Janus++ have storage costs that increase with database size.

\subsection{Evaluation of Addition}
\begin{figure}[htbp]
  \centering
  \includegraphics[width=0.45\textwidth]{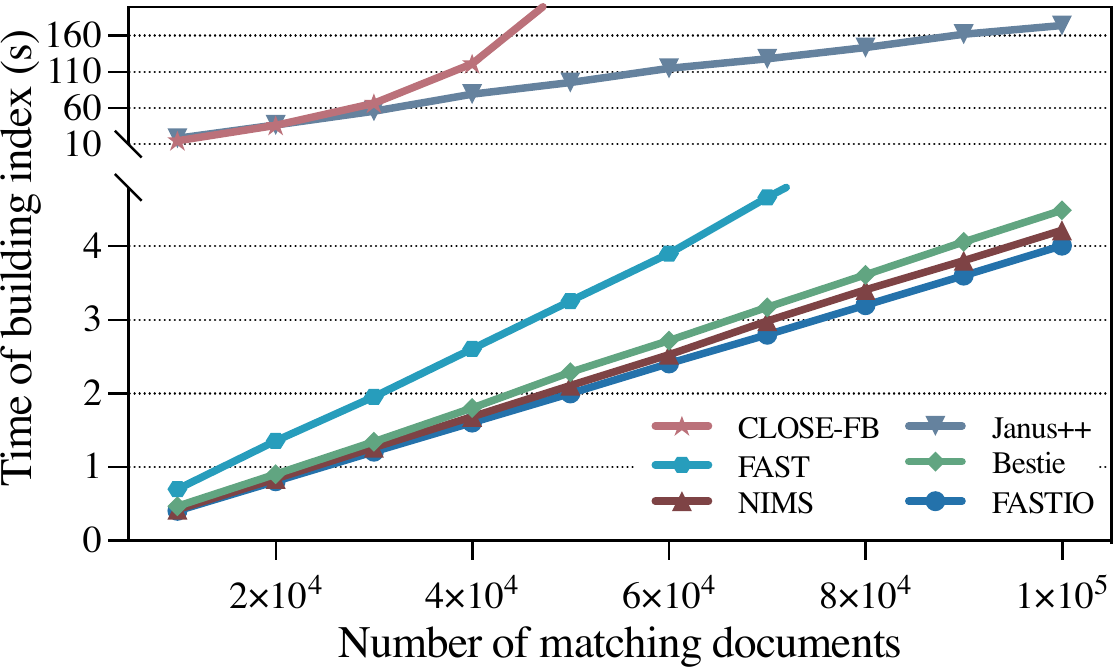}
  \caption{Comparison of different schemes in the encryption process.}
  \label{fig:index generation}
\end{figure}
To validate the addition performance, we initialize $10^5$ key-value pairs and add them to the database.
Fig.\ref{fig:index generation} compares the computational overhead of adding files for various schemes.
In NIMS, pseudo-random functions and hash functions are employed to construct the LSC structure, and then HL technique is applied to encrypt the head block keys of all keyword chains.
In CLOSE-FB,  while the global counter \textit{CLen} is limited, we only can perform the 'add' operation with the constant \textit{CLen} times.
When \textit{CLen} is exhausted, we search all the keywords to recover the document set \textit{DB} and upload ciphertexts to the server after re-encrypting \textit{DB}.
When the database is large, \textit{CLen} is soon exhausted and re-encryption costs much time in the addition operation.
In Janus++, for each keyword/document pair, Janus++ needs to calculate a GGM tree with $d$ layers, each encrypted with AES.
The upload efficiency increases as the parameter $d$ (i.e. the maximum number of documents to be deleted) increases.
In the comparison experiment, to improve the update efficiency in Janus++, we set the parameter $d=5$, where at most 5 files can be deleted.
In FAST, AES encryption was used to implement pseudorandom permutation and AES decryption to the inverse permutation.
In FASTIO and Bestie, only the pseudorandom function and hash function are employed.
Although NIMS has a little higher update time cost than Bestie and FASTIO,  NIMS is more useful since both Bestie and FASTIO should share the latest keyword status (i.e. keyword counter) to the data user in the case of multi-client.

\subsection{Evaluation of Deletion}

\begin{table}[htbp]
  \caption{Deletion communication cost}
  \centering
  \begin{tabular}{cccc}
    \hline
    \specialrule{0em}{1pt}{1pt}
     & 1,000 & 5,000  &10,000  \\ \hline
    \specialrule{0em}{1pt}{1pt}
    NIMS   & 9 B     & 9 B    &9 B  \\
    \specialrule{0em}{1pt}{1pt}
    Bestie & 67.39 KB     &336.92 KB   &673.84 KB  \\
    \specialrule{0em}{1pt}{1pt}
    FAST, FASTIO & 58.60 KB   & 292 KB  &585.95 KB   \\
    \specialrule{0em}{1pt}{1pt}
    CLOSE-FB  &59.58 KB &297.86 KB &595.71 KB\\ \hline
  \end{tabular}
  \label{table:deletion}
\end{table}

All compared schemes that perform deletion operations have to generate search tokens based on ($ind, w$) pairs, which means numerous deletion tokens are sent to the server when a document is to be deleted.
What makes our solution different from others is that we do not have to store all keywords of the deleted document.
Our proposed scheme could generate a token based on the identifier to delete the document.

We take deletion operations into account and test deletion performance.
This experiment randomly selects three groups of documents with the different number of documents to be deleted for testing NIMS, Bestie, CLOSE-FB, FAST and FASTIO deletion performance.
Janus++ is not considered in the comparison of deletion bandwidth cost because of the high addition time when we set the maximum deletion number $d$ to be larger than 100. 
Because the deletion operations of FAST and FASTIO are similar, we have combined them into one group.
Table \ref{table:deletion} shows the deletion communication cost based on the different number of keywords in the deleted document.
NIMS achieves the lowest bandwidth cost compared with other schemes.

\subsection{Evaluation of Search}

\begin{figure}[htbp]
  \centering
  \includegraphics[width=0.45\textwidth]{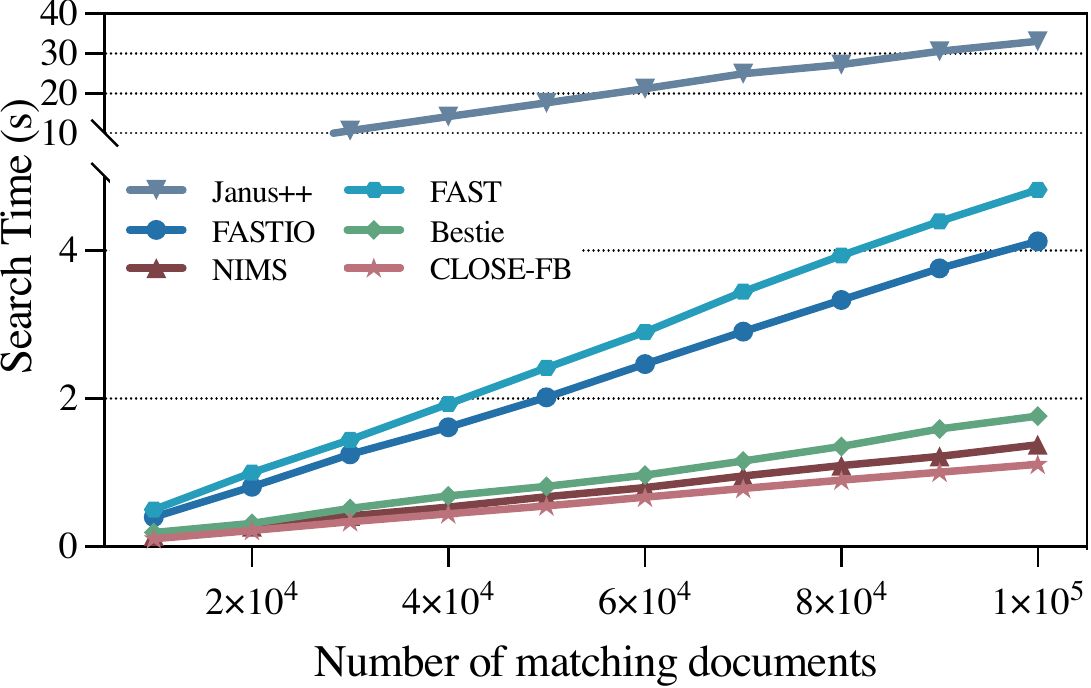}
  \caption{Comparison of different schemes in the search process.}
  \label{fig:search time}
\end{figure}

Fig.\ref{fig:search time} compares the search performance of six schemes on the same database.
As the number of matched documents increases with database size, the results show that the search time is related to database size.
Since Janus++ uses AES and GGM trees to achieve puncturable encryption, each plaintext recovered requires the punctured key.
For FAST and FASTIO, use inverse permutation to support forward privacy, but it also achieves better search efficiency.
The search efficiency of our scheme is slightly lower than CLOSE-FB, since NIMS needs to implement a matching operation to achieve non-interactive i.e., the server searches the header block keys first.
Obviously, the search time of our proposed scheme slowly increases with the number of matching entries.

\begin{figure}[H]
  \centering
  \includegraphics[width=0.45\textwidth]{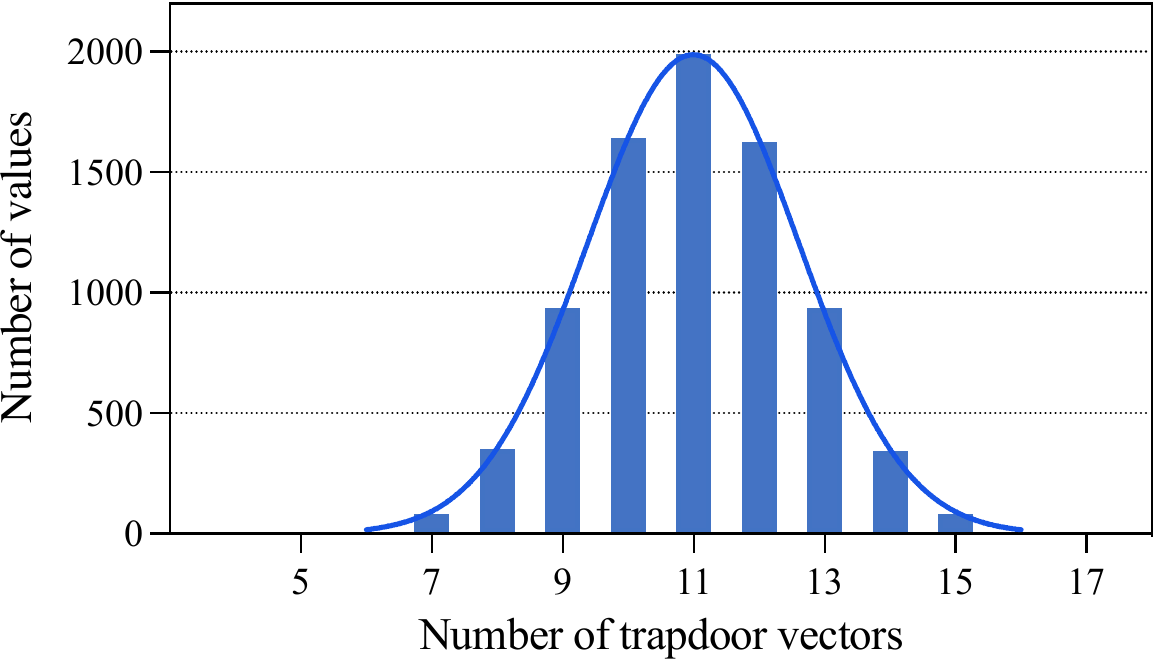}
  \caption{Frequency distribution of trapdoor vector numbers.}
  \label{fig:Frequency distribution}
\end{figure}

We make frequency statistics on the number of trapdoor vectors generated in 100,000 seconds, described in Fig.\ref{fig:Frequency distribution}
We observe a Gaussian-like frequency distribution, where the number of trapdoor vectors is mostly ranged from 8 to 12.
To obtain the head block key, the server randomly selects encrypted links and a trapdoor vector set to perform matrix multiplication.
Assuming that the time to calculate the matrix trace is $T$, the time to acquire the head block key is at most $11\times T\times|W|$.

\begin{figure}[H]
  \centering
  \includegraphics[width=0.45\textwidth]{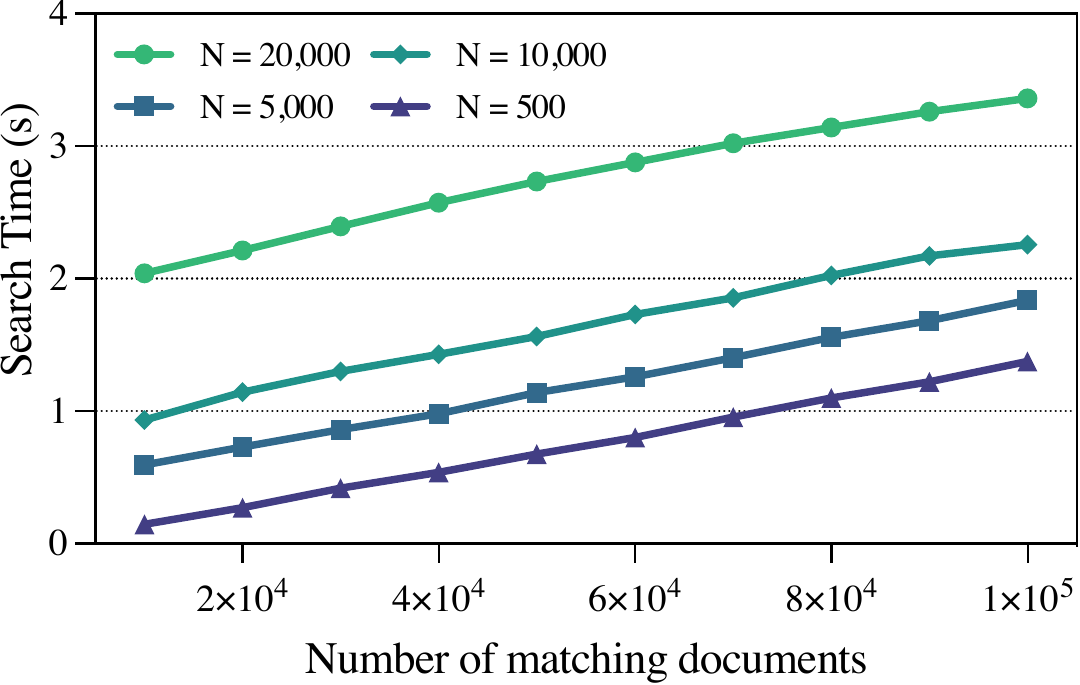}
  \caption{Search time with various numbers of keywords in the search process.}
  \label{fig:search time keywords}
\end{figure}

We measured the search latency under various parameter settings to explore more about the search performance of NIMS.
We expect to understand the effect of $|W|$ on searching the head block key time.
Fig.\ref{fig:search time keywords} describes the maximum search time with different numbers of keywords in NIMS. Noted that the server most probably does not need to compute traces of all keyword state matrices and search matrices.
With the growth of the number of keywords, the time for fetching the same number of matched documents is relatively increased, because the number of encrypted keyword matrices is related to the number of keywords.
Matrix multiplication takes a small amount of time, whereas 100,000 multiplications only take 1 second.
The increased rate of search time with the different number of keywords is similar.
We use HK technique to efficiently realize non-interactive forward privacy.

\subsection{Summary}
Our testing schemes are all DSSE schemes with forward privacy while NIMS, CLOSE-FB, Janus++, and Besite achieve backward privacy.
The performance of addition in FASTIO is similar to our scheme.
However, FASTIO performs a little inferior on the search operation because of the inverse permutation using AES decryption.
The search performance of CLOSE-FB is the best, but \textit{CLen} limits the update performance.
On balance, NIMS achieves optimal update and search performance.

\section{Related Work}
SSE is a primary apporach for encrypted data searching in cloud servers~\cite{song2000practical}.
In the proposed SSE scheme, the client outsources the encrypted documents to an untrusted server and queries for documents that contain a specific keyword with linear search time.
Golle et al.~\cite{golle2004secure} proposed a scheme to generate encrypted indices for keywords based on Bloom filters, but the efficiency is not high.
Subsequent research in \cite{curtmola2011searchable, chase2010structured, wang2015inverted} mainly focuses on static datasets.

DSSE was proposed to support data updates, allowing files to be added/deleted in the database.
Kamara et al.~\cite{kamara2012dynamic} constructed a DSSE scheme that achieved sub-linear search but updating documents would leak the document structure.
Since then, several schemes \cite{kamara2013parallel, naveed2014dynamic, cash2014dynamic} realizes dynamic searchable encryption.
However, malicious servers can easily obtain additional information by observing repeated queries and other information.
As file-injection and leakage-abuse attacks have been demonstrated,  researchers realized that forward privacy is an important research direction of DSSE.
Forward privacy focuses on keeping update operations from linking to previous search queries.
Schemes \cite{bost2016ovarphiovarsigma, kim2017forward,yang2017rspp} were proposed that supports forward privacy.

Just recently, Bost et al.~\cite{bost2017forward} formally defined the notion of backward privacy.
Backward privacy prevents the server from knowing the deleted document.
Since then, forward and backward privacy became the basic requirements of DSSE.
Some schemes with forward and backward privacy \cite{bost2017forward,ghareh2018new,sun2021practical,9237959} were proposed.
But they all focus on the single-client framework which is not practical in the real-world case.

Although there are many schemes \cite{jarecki2013outsourced, faber2015rich, zuo2019dynamic} with the multi-client settings, they need frequent interactions between data owner and users since keyword states are recorded on the data owner side.
Some schemes introduce a proxy server to share secret keys and keyword states, such as the scheme proposed by Wang et al.~\cite{10.1007/978-3-030-02744-5_9}.
In addition, since existing single-client solutions keep the keyword status locally \cite{ghareh2018new,9237959}, a simple extension of a single solution to a multi-client setup would require frequent interactions between data owners and data users.
To save communication overhead and reduce the risks from interaction, it is necessary to avoid frequent interactions. 
Sun et al. constructed two non-interactive multi-client SSE protocols \cite{10.1007/978-3-319-45744-4_8, sun2020non}.
However, the two schemes only focus on the static database.
After that, \cite{sun2018dynamic} was proposed which achieves multi-client non-interactive.
But, it can not achieve forward and backward privacy and may suffer from file injection attacks.

\section{CONCLUSION}
In this work, we propose the first non-interactive multi-client DSSE scheme with small client storage.
Observing that most existing schemes store an inverted index or a forward index incurring a heavy computation cost, we set a global variable that binds all keywords to signal status.
Meanwhile, we hold a table storing the encrypted file identifiers to achieve efficient deletion operation.
We further propose a more practical chain structure that implicitly links all keyword/document pairs and takes up small client-side storage.
To achieve multi-client non-interactive, we design a hidden key technique 
to encrypt the head block key of the keyword chains.
The technique leverages the time range query to ensure forward privacy.
Furthermore, we propose a new framework for constructing a non-interactive multi-client DSSE scheme with small client storage.
Moreover, we implemented our scheme, CLOSE-FB, FAST, FASTIO, and Janus++ for evaluation. Experimental results demonstrate our scheme's superior efficiency and dependability.


%

\ifCLASSOPTIONcaptionsoff
  \newpage
\fi

\bibliographystyle{IEEEtran}
\bibliography{Reference}



\begin{IEEEbiography}
 [{\includegraphics[width=1in,height=1.25in,clip,keepaspectratio]{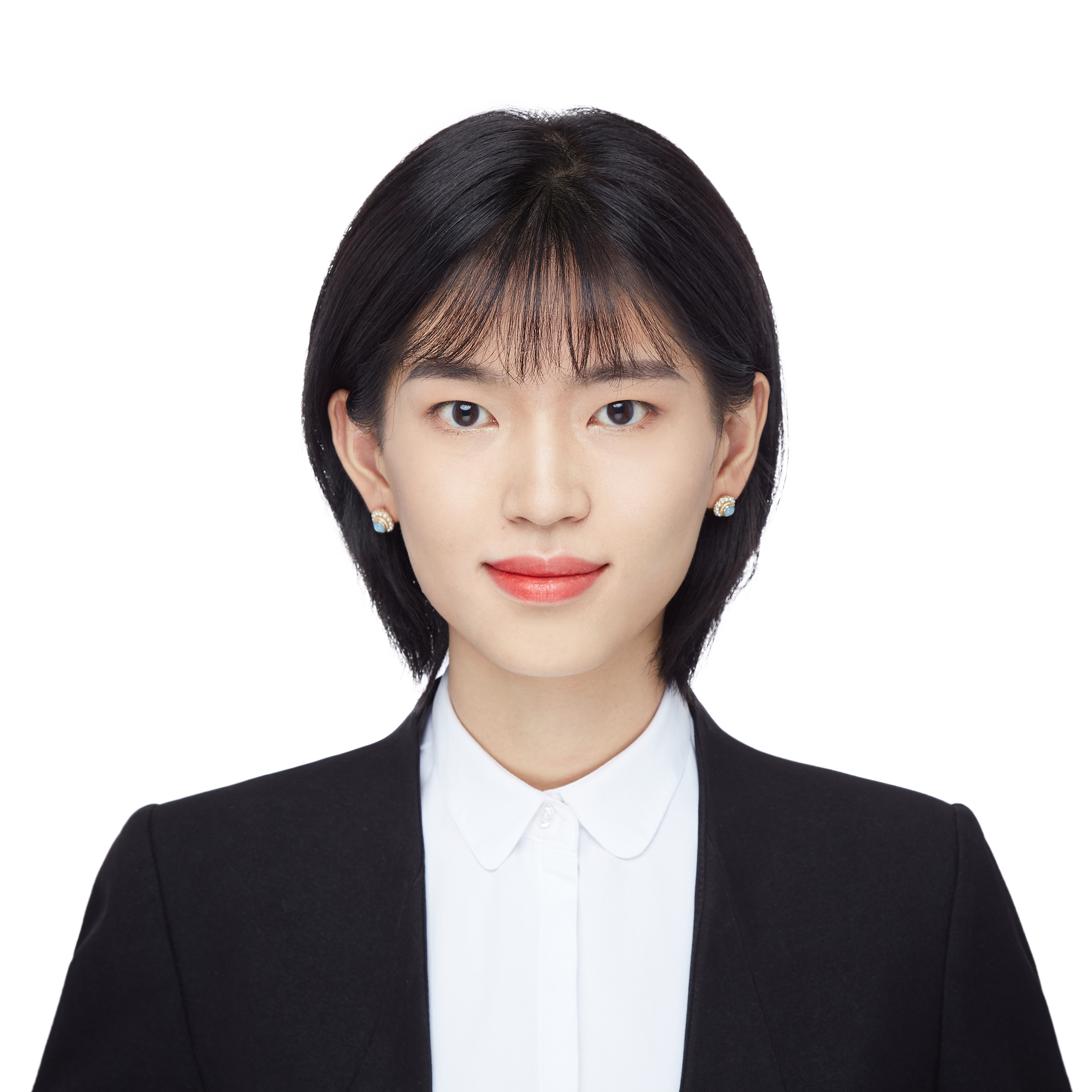}}]{Hanqi Zhang} received the B.S. degree in computer science from Beijing Institute of Technology, Beijing, in 2018. She is currently pursuing the Ph.D. degree at the School of Cyberspace Science and Technology, Beijing Institute of Technology. Her current research interests include security and privacy in searchable symmetric encryption.
\end{IEEEbiography}

\begin{IEEEbiography}
 [{\includegraphics[width=1in,height=1.25in,clip,keepaspectratio]{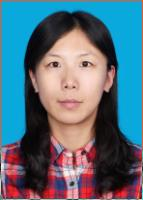}}]{Chang Xu} received her Ph.D. degree in computer science from Beihang University, Beijing, China, in 2013. She is currently an associate professor at the School of Cyberspace Science and Technology, Beijing Institute of Technology. Her research interests include security \& privacy in VANET, and big data security.
\end{IEEEbiography}

\begin{IEEEbiography}
 [{\includegraphics[width=1in,height=1.25in,clip,keepaspectratio]{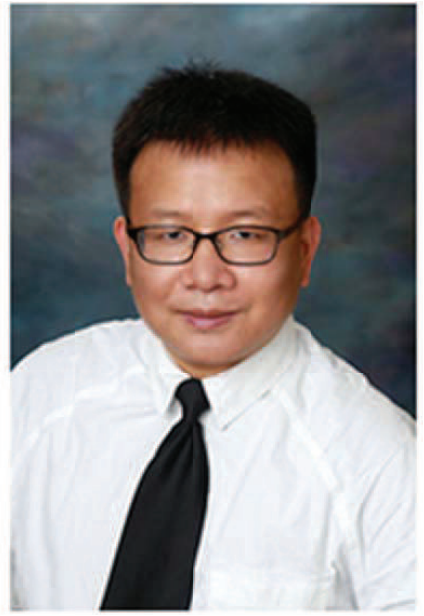}}]{Rongxing Lu} [S'09-M'11-SM'15-F'21] is an associate professor at the Faculty of Computer Science (FCS), University of New Brunswick (UNB), Canada. Dr. Lu is an IEEE Fellow. His research interests include applied cryptography, privacy enhancing technologies, and IoT-Big Data security and privacy. He has published extensively in his areas of expertise (with H-index 72 from Google Scholar as of November 2020), and was the recipient of 9 best (student) paper awards from some reputable journals and conferences. Currently, Dr. Lu serves as the Vice-Chair (Conferences) of IEEE ComSoc CIS-TC (Communications and Information Security Technical Committee).
\end{IEEEbiography}

\begin{IEEEbiography}
 [{\includegraphics[width=1in,height=1.25in,clip,keepaspectratio]{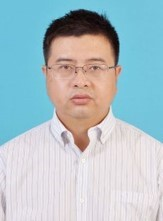}}]{Liehuang Zhu} received his Ph.D. degree in computer science from Beijing Institute of Technology, Beijing, China, in 2004. He is currently a professor at the School of Cyberspace Science and Technology, Beijing Institute of Technology. His research interests include security protocol analysis and design, group key exchange protocols, wireless sensor networks, cloud computing, and blockchain applications.
\end{IEEEbiography}

\begin{IEEEbiography}
 [{\includegraphics[width=1in,height=1.25in,clip,keepaspectratio]{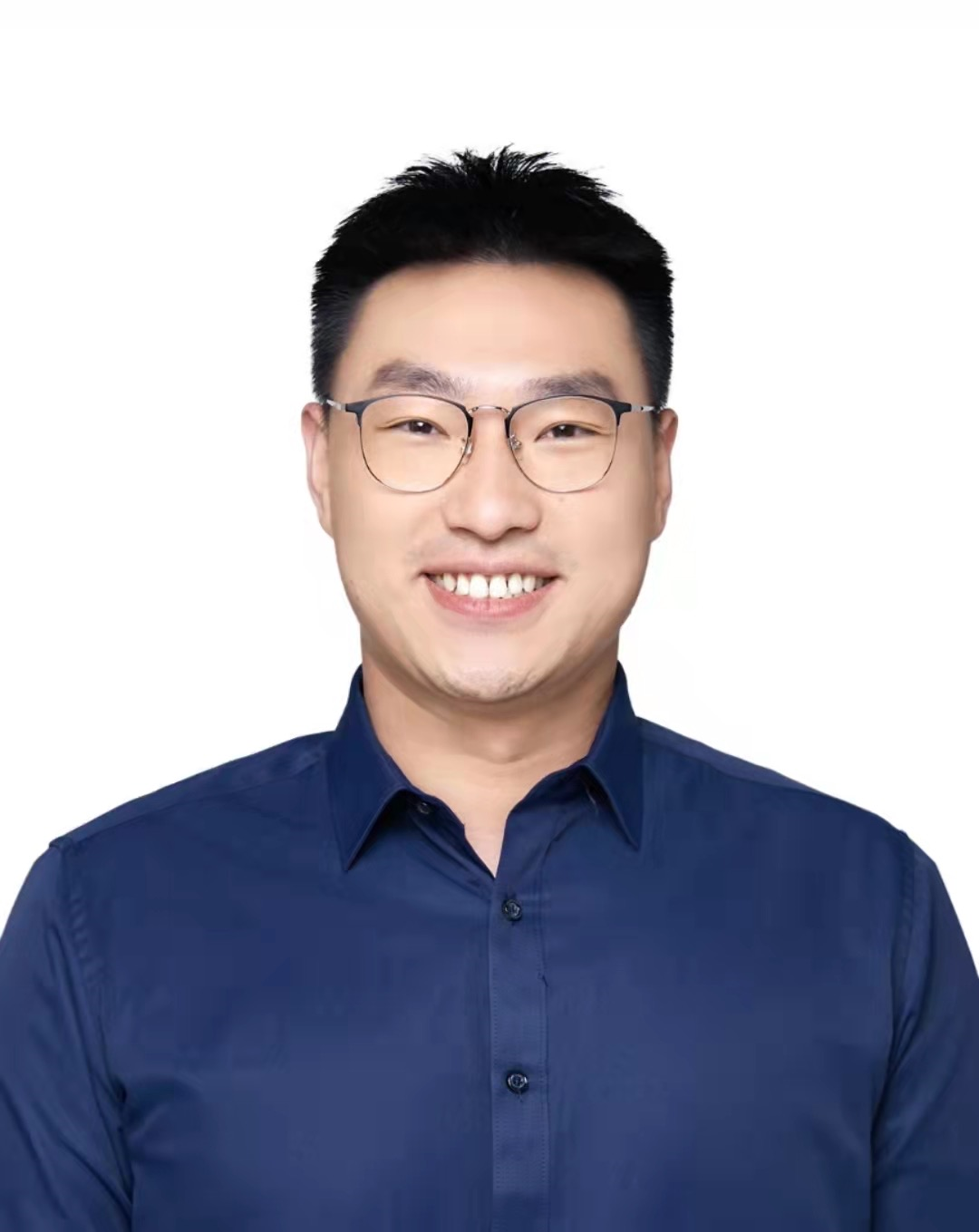}}]{Chuan Zhang} received his Ph.D. degree in computer science from Beijing Institute of Technology, Beijing, China, in 2021.  From Sept. 2019 to Sept. 2020, he worked as a visiting Ph.D. student with the BBCR Group, Department of Electrical and Computer Engineering, University of Waterloo, Canada. He is currently an assistant professor at School of Cyberspace Science and Technology, Beijing Institute of Technology. His research interests include secure data services in cloud computing, applied cryptography, machine learning, and blockchain.
\end{IEEEbiography}

\begin{IEEEbiography}
 [{\includegraphics[width=1in,height=1.25in,clip,keepaspectratio]{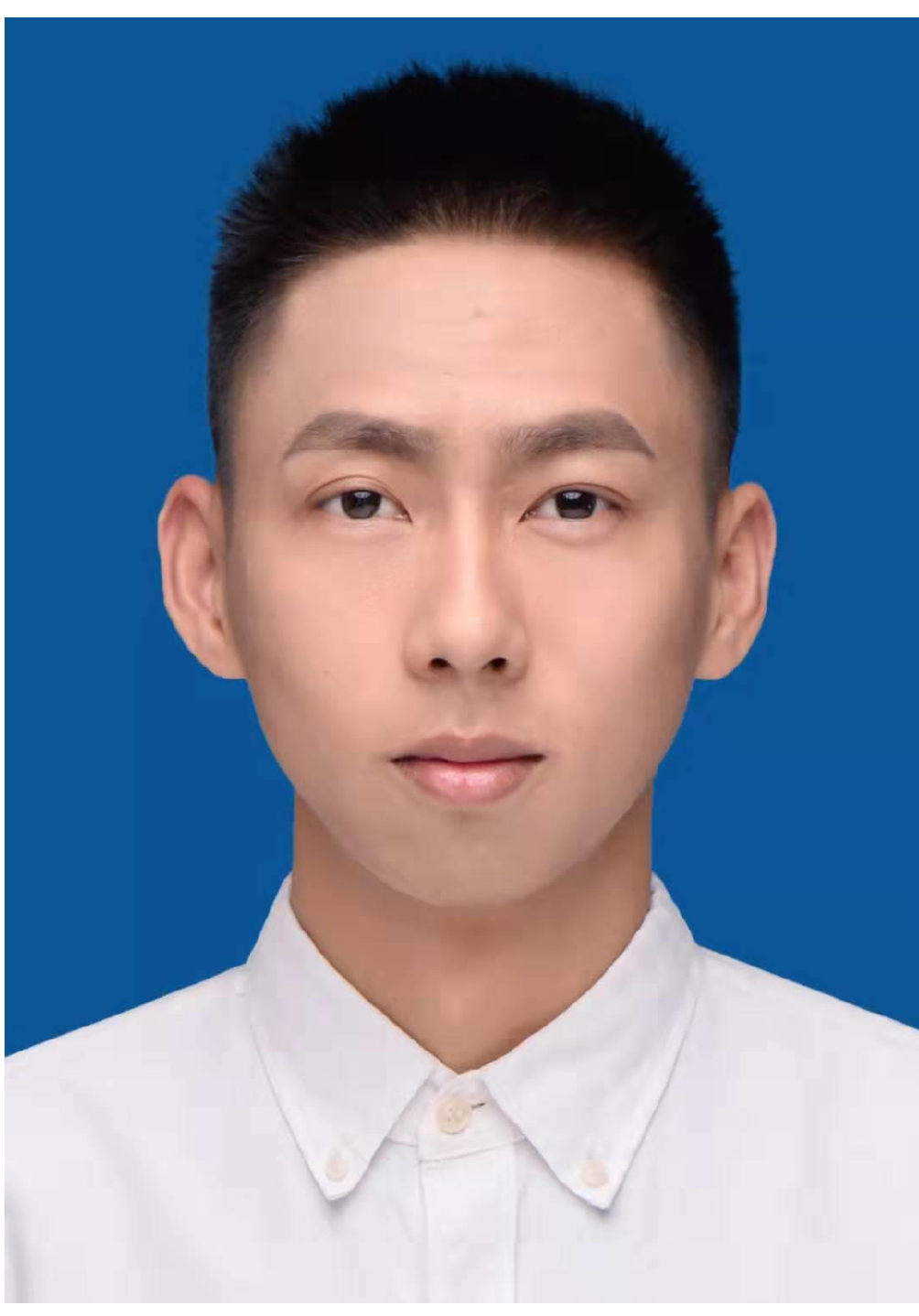}}]{Yunguo Guan} is a PhD student of  the Faculty of Computer Science, University of New Brunswick, Canada. His research interests include applied cryptography and game theory.
\end{IEEEbiography}




\end{document}